\newcommand\EFFACE[1]{}
\newcommand\ESOK[1]{#1}
\newcommand\OR[1]{\overrightarrow{#1}}
\def\ORT{\OR{\Theta}}
\newtheorem{theorem}{Theorem}[section]
\newtheorem{proposition}[theorem]{Proposition}
\newtheorem{lemma}[theorem]{Lemma}
\newtheorem{observation}[theorem]{Observation}
\newenvironment{proof}{
\par
\noindent {\bf Proof.}\rm}{\mbox{}\hfill$\square$\par\vskip 3mm}
\newcommand\pcn{\chi_{\rho}}
\def\LL{\ \longleftarrow\ }
\def\RR{\ \longrightarrow\ }
\let\@fnsymbol\@arabic
\begin{document}

\title{{\bf On the Packing Coloring of Undirected\\ and Oriented Generalized Theta Graphs}}

\author{Daouya LA\"{I}CHE~\thanks{Faculty of Mathematics, Laboratory L'IFORCE, University of Sciences and Technology
Houari Boumediene (USTHB), B.P.~32 El-Alia, Bab-Ezzouar, 16111 Algiers, Algeria.}
\and Isma BOUCHEMAKH~\footnotemark[1]
\and \'Eric SOPENA~\thanks{Univ. Bordeaux, LaBRI, UMR5800, F-33400 Talence, France.}~$^,$\thanks{CNRS, LaBRI, UMR5800, F-33400 Talence, France.}~$^,$\footnote{Corresponding author. Eric.Sopena@labri.fr.}
}

\maketitle

\abstract{
The packing chromatic number $\pcn(G)$ of an undirected (respectively oriented) graph $G$ is the smallest integer $k$ such that
its set of vertices $V(G)$ can be partitioned into $k$ disjoint subsets $V_1$, \ldots, $V_k$, in such a way that every two distinct vertices
in $V_i$ are at distance (respectively directed distance) greater than $i$ in $G$ for every $i$, $1\le i\le k$.
The generalized theta graph $\Theta_{\ell_1,\dots,\ell_p}$ consists in two end-vertices joined by $p\ge 2$ internally vertex-disjoint paths
with respective lengths $1\le \ell_1\le\dots\le \ell_p$.

We prove that the packing chromatic number of any undirected generalized theta graph
lies between 3 and $\max\{5,n_3+2\}$, where $n_3=|\{i\ /\ 1\le i\le p,\ \ell_i=3\}|$, and that both these bounds are tight.
We then characterize undirected generalized theta graphs with packing chromatic number $k$ for every $k\ge 3$.
We also prove that the packing chromatic number of any oriented generalized theta graph
lies between 2 and~5 and that both these bounds are tight.
}%abstract

\medskip

\noindent
{\bf Keywords:} Packing coloring; Packing chromatic number; Theta graph; Generalized theta graph.

\medskip

\noindent
{\bf MSC 2010:} 05C15, 05C70.

%%%%%%%%%%%%%%%%%%%%%%%%%%%%%%%%%%%%%%%%%%%%%%%%%%%%%%%%%%%%%%%%%%%%%%%%%%%%%%%%%%%%%%%%%%%%%%%%%%%%%%%%%%%%%%%%%%%%%%%%%%%%%
%%%%%%%%%%%%%%%%%%%%%%%%%%%%%%%%%%%%%%%%%%%%%%%%%%%%%%%%%%%%%%%%%%%%%%%%%%%%%%%%%%%%%%%%%%%%%%%%%%%%%%%%%%%%%%%%%%%%%%%%%%%%%
%%%%%%%%%%%%%%%%%%%%%%%%%%%%%%%%%%%%%%%%%%%%%%%%%%%%%%%%%%%%%%%%%%%%%%%%%%%%%%%%%%%%%%%%%%%%%%%%%%%%%%%%%%%%%%%%%%%%%%%%%%%%%
%%%%%%%%%%%%%%%%%%%%%%%%%%%%%%%%%%%%%%%%%%%%%%%%%%%%%%%%%%%%%%%%%%%%%%%%%%%%%%%%%%%%%%%%%%%%%%%%%%%%%%%%%%%%%%%%%%%%%%%%%%%%%
%%%%%%%%%%%%%%%%%%%%%%%%%%%%%%%%% INTRODUCTION
\section{Introduction}

All the graphs we consider are \ESOK{simple and loopless}.
For an undirected graph $G$, we denote by $V(G)$ its set of vertices and by $E(G)$ its set of edges.
The {\em distance} $d_G(u,v)$, or simply $d(u,v)$ when $G$ is clear from the context, between
vertices $u$ and $v$ in $G$ is the length \ESOK{(number of edges)} of a shortest path joining $u$ and $v$.
The {\em diameter} %diam$(G)$ 
of $G$ is the maximum distance between two vertices of $G$.
We denote by $P_n$, $n\ge 1$, the path of order $n$ and by $C_n$, $n\ge 3$, the cycle of order~$n$.

A {\em packing $k$-coloring} of $G$ is a mapping $\pi:V(G)\rightarrow\{1,\ldots,k\}$
such that, for every two distinct vertices $u$ and $v$,  \ESOK{$\pi(u)=\pi(v)=i$ implies $d(u,v)>i$}.
The {\em packing chromatic number} $\pcn(G)$ of $G$ is then the smallest $k$ such that
$G$ admits a packing $k$-coloring.
In other words, $\pcn(G)$ is the smallest integer $k$ such that
$V(G)$ can be partitioned into $k$ disjoint subsets $V_1$, \ldots, $V_k$, in such a way that every two vertices
in $V_i$ are at distance greater than $i$ in $G$ for every $i$, $1\le i\le k$.
%A packing coloring of $G$ is {\em optimal} if it uses exactly $\pcn(G)$ colors.

This notion extends to digraphs in a natural way~\cite{LBS15}, by considering the \emph{(weak) directed distance}
between vertices, defined as the number of arcs in a shortest directed path linking these vertices, in either direction.

Packing coloring of undirected graphs has been introduced by Goddard, Hedetniemi, Hedetniemi, Harris and Rall~\cite{GHHHR03,GHHHR08} under the name {\em broadcast coloring}
and has been studied by several authors in recent years.
Several papers deal with the packing chromatic number of certain classes of undirected graphs such
as trees~\cite{ANT14,BKR07,GHHHR08,RBFK08,S04},
lattices~\cite{BKR07,EFHL10,FKL09,FR10,KV14,SH10},
Cartesian products~\cite{BKR07,FKL09,RBFK08},
distance graphs~\cite{EHL12,EHT14,T14} or hypercubes~\cite{GHHHR08,TVP15,WRR14b}.
Complexity issues of the packing coloring problem were adressed
in~\cite{A11,ANT12,ANT14,FG10,G15,GHHHR08}.

Let $H$ be a subgraph of $G$. Since $d_G(u,v)\le d_H(u,v)$ for any two vertices $u,v\in V(H)$,
the restriction to $V(H)$ of any packing coloring of $G$ is a packing coloring
of $H$. This property obviously holds for digraphs as well.
Hence, having packing chromatic number at most $k$
is a hereditary property:

\begin{proposition}[Goddard, Hedetniemi, Hedetniemi, Harris and Rall~\cite{GHHHR08}]\mbox{}\\
Let $G$ and $H$ be two undirected graphs, or two digraphs.
If $H$ is a subgraph of $G$, then $\pcn(H)\le\pcn(G)$.
\label{prop:subgraph}
\end{proposition}

Fiala and Golovach~\cite{FG10} proved that determining the packing chromatic number is an NP-hard problem for undirected trees.
The exact value of the packing chromatic number of undirected trees with diameter at most 4 was given in~\cite{GHHHR08}.
The packing chromatic number of undirected paths and cycles has been
determined by Goddard {\em et al.}:

\begin{theorem}[Goddard, Hedetniemi, Hedetniemi, Harris and Rall~\cite{GHHHR08}]\mbox{}
\begin{enumerate}[label={\rm (\arabic*)}]
\item  For every $n\ge 1$, $\pcn(P_n)\le 3$. Moreover, $\pcn(P_n)=1$ if and only if $n=1$ and 
$\pcn(P_n)=2$ if and only if $n\in\{2,3\}$.
\item  For every $n\ge 3$, $3\le\pcn(C_n)\le 4$. Moreover, $\pcn(C_n)=3$ if and only if $n=3$ or $n\equiv 0\pmod 4$. 
\end{enumerate}
\label{th:goddard}
\end{theorem}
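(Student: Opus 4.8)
The plan is to prove the two statements separately, obtaining each upper bound from an explicit, essentially periodic coloring and each matching lower bound from a short structural analysis, with the heredity of Proposition~\ref{prop:subgraph} used to propagate small cases.

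For statement (1) I would first exhibit the periodic pattern $1,2,1,3,1,2,1,3,\dots$ read along $P_n$: the vertices coloured $1$ occupy the odd positions and are pairwise at distance at least $2$, while those coloured $2$ (resp.\ $3$) are pairwise at distance $4$, so this is a valid packing $3$-coloring and $\pcn(P_n)\le 3$ for every $n$. The exact small values are then immediate: $\pcn(P_1)=1$ is trivial; for $n\ge 2$ two adjacent vertices forbid a single colour, so $\pcn(P_n)\ge 2$, with $1,2$ and $1,2,1$ giving equality for $n\in\{2,3\}$. Finally, a direct check shows that $P_4$ admits no packing $2$-coloring (a colour-$1$ class is independent and a colour-$2$ class must have its vertices pairwise at distance $>2$, and no such partition of four collinear vertices exists); by Proposition~\ref{prop:subgraph} every $P_n$ with $n\ge 4$ contains $P_4$, hence has $\pcn\ge 3$, giving $\pcn(P_n)=3$ there.

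For statement (2), the upper bound $\pcn(C_n)\le 4$ and the two cases with value $3$ come from explicit colorings: when $4\mid n$ the pattern $(1,2,1,3)$ repeated closes up consistently into a valid $3$-coloring, and $C_3$ is coloured $1,2,3$; for the remaining lengths I would use the same periodic pattern on most of the cycle and repair the \emph{seam} with one extra colour $4$ (verifying the three short tails for $n\equiv 1,2,3\pmod 4$ by hand and extending periodically). The lower bound $\pcn(C_n)\ge 3$ follows by ruling out $2$-colorings: with only colours $1,2$, every non-colour-$1$ vertex is coloured $2$, so each gap between consecutive colour-$1$ vertices contains exactly one vertex, forcing the pattern $1,2,1,2,\dots$, in which two colour-$2$ vertices lie at distance $2$, impossible for $n\ge 3$.

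The substantial part is the converse in (2): if $\pcn(C_n)=3$ then $n=3$ or $4\mid n$. Since this is immediate for $n\le 4$, assume $n\ge 5$, fix a packing $3$-coloring, and study the gaps between consecutive colour-$1$ vertices (which form an independent set, so every gap has size at least $1$). First, each gap has size $1$ or $2$: three consecutive non-colour-$1$ vertices would use only colours $2,3$ and, being properly coloured, must read $2,3,2$ or $3,2,3$, producing two equal colours at distance $2$ and violating the packing condition for colour $2$ or colour $3$. The key step is that a gap of size $2$ cannot occur: such a gap reads $1,2,3,1$ or $1,3,2,1$, and then the vertex immediately preceding it (resp.\ following it) can be neither colour $1$ (it is adjacent to a $1$), nor colour $2$ (it would lie at distance $2$ from the gap's colour-$2$ vertex), nor colour $3$ (distance $3$ from the gap's colour-$3$ vertex), a contradiction. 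Hence all gaps have size $1$, the coloring reads $1,x_1,1,x_2,\dots$ with $x_i\in\{2,3\}$ and $n=2a$, where $a$ is the number of colour-$1$ vertices; consecutive $x_i$ must differ (two equal colours would again sit at distance $2$), so the $x_i$ alternate $2,3,2,3,\dots$ and close up cyclically only when $a$ is even, that is $4\mid n$. The triangle escapes the size-$2$-gap argument precisely because it has a single colour-$1$ vertex whose unique size-$2$ gap wraps onto itself, which is why $n=3$ is the lone small exception. The main obstacle is exactly this gap analysis; once size-$2$ gaps are excluded by the uncolourable-neighbour argument, the divisibility conclusion and all the explicit constructions are routine.
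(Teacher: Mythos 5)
Your argument is correct, but note that the paper itself contains no proof of this statement: Theorem~\ref{th:goddard} is quoted from Goddard, Hedetniemi, Hedetniemi, Harris and Rall~\cite{GHHHR08}, so there is no internal proof to compare against. Your reconstruction is the standard one and all the essential steps are sound: the periodic pattern $(1213)^+$ for paths, the $P_4$ obstruction propagated via Proposition~\ref{prop:subgraph}, the exclusion of packing $2$-colorings of cycles, and, for the converse in~(2), the analysis of gaps between colour-$1$ vertices (gaps of size at least $3$ force two equal colours among $\{2,3\}$ at distance~$2$; gaps of size $2$ leave no admissible colour for the adjacent vertex; hence strict alternation and $4\mid n$). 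Two points are asserted rather than written out and should be made explicit in a full write-up: the three seam patterns giving the upper bound $\pcn(C_n)\le 4$ for $n\not\equiv 0\pmod 4$ (e.g.\ $(1213)^*$ followed by $4$, $14$ or $124$, with the wrap-around distances checked), and the degenerate case of a colouring of $C_n$ using no colour~$1$ at all, which your ``three consecutive non-colour-$1$ vertices'' argument covers but which is not literally a ``gap'' between colour-$1$ vertices. Neither is a genuine gap; both are routine verifications.
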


\medskip

In this paper, we  consider undirected graphs and {\em orientations}
of undirected graphs, obtained by giving to each edge of such a graph one of its two possible orientations.
The so-obtained {\em oriented graphs} are thus digraphs having no pair of opposite arcs.
Let $\OR{G}$ be any orientation of an undirected graph $G$. Since for any two vertices $u,v$ in $V(G)$
we have $d_G(u,v)\le d_{\OR{G}}(u,v)$, where $d_{\OR{G}}(u,v)$ denotes the directed distance between $u$ and $v$, we get: 

\begin{proposition}
For every orientation $\OR{G}$ of an undirected graph $G$, $\pcn(\OR{G})\le\pcn(G)$.
\label{prop:oriented}
\end{proposition}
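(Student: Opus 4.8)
The plan is to show that a single coloring does double duty: any optimal packing coloring of the undirected graph $G$ is, without any modification, already a valid packing coloring of the orientation $\OR{G}$, so that orienting can never force the use of extra colors. Everything rests on the distance inequality $d_G(u,v)\le d_{\OR{G}}(u,v)$ recorded just above the statement. To see why it holds, note that any directed path from $u$ to $v$ (or from $v$ to $u$) in $\OR{G}$ is in particular an undirected $u$--$v$ path of the same length in $G$; since the undirected distance is a minimum taken over a strictly larger family of paths, it can only be smaller than or equal to the directed distance.

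First I would fix an optimal packing coloring $\pi:V(G)\to\{1,\dots,k\}$ with $k=\pcn(G)$, and then verify that the very same map $\pi$, interpreted on $V(\OR{G})=V(G)$, is a packing $k$-coloring of $\OR{G}$. For this, take any two distinct vertices $u,v$ with $\pi(u)=\pi(v)=i$. The packing condition in $G$ gives $d_G(u,v)>i$, and combining this with $d_{\OR{G}}(u,v)\ge d_G(u,v)$ yields $d_{\OR{G}}(u,v)>i$, which is exactly the packing condition required in $\OR{G}$. Hence $\pi$ witnesses $\pcn(\OR{G})\le k=\pcn(G)$.

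I do not expect any real obstacle, since this is the directed analogue of the hereditary property already stated as Proposition~\ref{prop:subgraph}, and the only substantive fact needed is that orienting a graph can only increase pairwise distances. The single point deserving a line of care is the convention for the weak directed distance: when no directed path joins $u$ and $v$ in either direction one sets $d_{\OR{G}}(u,v)=\infty$, in which case the required inequality $d_{\OR{G}}(u,v)>i$ is satisfied trivially, so the argument goes through uniformly whether or not the chosen orientation keeps every pair mutually reachable.
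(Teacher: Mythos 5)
Your proposal is correct and follows exactly the paper's argument: the paper derives the proposition from the single observation $d_G(u,v)\le d_{\OR{G}}(u,v)$, which makes any packing coloring of $G$ a packing coloring of $\OR{G}$. Your additional remarks on why that inequality holds and on the convention $d_{\OR{G}}(u,v)=\infty$ for mutually unreachable pairs are sound elaborations of the same reasoning.
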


Let $u$ be a vertex in an oriented graph $\OR{G}$. We say that $u$ is a \emph{source} if $u$ has no incoming arc
and that $u$ is a \emph{sink} if $u$ has no outgoing arc.
If $\OR{uvw}$ is a directed path in $\OR{G}$, then $d_{\OR{G}}(u,w)\le 2$. Hence, $u$ and $w$ cannot be both assigned color 2
in any packing coloring of~$\OR{G}$. From this observation, we get an easy characterization of oriented graphs with packing
chromatic number~2:

\begin{proposition}[La\"iche, Bouchemakh and Sopena~\cite{LBS15}]\mbox{}\\
For every orientation $\OR{G}$ of an undirected graph $G$, $\pcn(\OR{G})=2$
if and only if (i) $G$ is bipartite and (ii) one part of the bipartition of $G$
contains only sources or sinks in $\OR{G}$.
\label{prop:pcn2}
\end{proposition}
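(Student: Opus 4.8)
The plan is to prove both implications of the equivalence, using throughout the observation recorded just above the statement: whenever $\OR{uvw}$ is a directed path, $d_{\OR{G}}(u,w)\le 2$, so $u$ and $w$ cannot share color~$2$.

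For the forward implication, I would start from a packing $2$-coloring $\pi$ of $\OR{G}$ and set $V_i=\pi^{-1}(i)$. Both color classes are independent in the underlying graph $G$: two vertices colored~$1$ are at directed distance $>1$ and hence non-adjacent, and likewise two vertices colored~$2$ are at directed distance $>2>1$ and hence non-adjacent. Since $V_1$ and $V_2$ partition $V(G)$ into two independent sets, $G$ is bipartite, which is condition~(i). For condition~(ii) I would argue that $V_1$ contains only sources or sinks. Suppose some $x\in V_1$ is neither; then it has an in-neighbor $u$ and an out-neighbor $w$, both lying in $V_2$ by bipartiteness, and distinct because $\OR{G}$, being an orientation, has no pair of opposite arcs. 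But then $u\to x\to w$ is a directed path, so $d_{\OR{G}}(u,w)\le 2$, contradicting that $u,w\in V_2$. Hence every vertex of $V_1$ is a source or a sink, and $(V_1,V_2)$ is a bipartition whose first part satisfies~(ii).

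For the converse, I would take a bipartition $(A,B)$ of $G$ in which, say, $A$ contains only sources or sinks, color every vertex of $A$ with~$1$ and every vertex of $B$ with~$2$, and check that this is a packing $2$-coloring. The class $A$ is independent, so its vertices are pairwise at distance $>1$. For the class $B$, two distinct vertices are non-adjacent (again by bipartiteness), so it remains only to rule out directed distance exactly~$2$; any directed path of length~$2$ between two vertices of $B$ would pass through a vertex $x\in A$ having both an incoming and an outgoing arc, contradicting that $x$ is a source or a sink. Thus the vertices of $B$ are pairwise at directed distance $>2$, and $\pcn(\OR{G})\le 2$. Since the presence of at least one edge forces $\pcn(\OR{G})\ge 2$, we conclude $\pcn(\OR{G})=2$.

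The argument is essentially a matching of the two sides, and I expect no serious obstacle; the one point requiring care is the bookkeeping about which part of the bipartition must consist of sources and sinks and, correspondingly, which part receives color~$2$. The crucial structural input in both directions is the same: a vertex that is neither a source nor a sink sits in the middle of a length-$2$ directed path, and the two distinct endpoints of such a path can never both carry color~$2$.
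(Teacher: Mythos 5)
Your proof is correct and follows exactly the route the paper intends: the observation stated immediately before the proposition (the two endpoints of a directed path $\OR{uvw}$ of length two cannot both receive color~$2$) is the only structural input needed, and you apply it cleanly in both directions, including the necessary remark that $u\neq w$ because an orientation has no opposite arcs. The paper itself only quotes this result from~\cite{LBS15} without reproducing a proof, so there is nothing further to compare against.
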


In~\cite{LBS15}, we determined the packing chromatic number of undirected and oriented generalized coronae of paths and cycles.
In particular, the packing chromatic number of oriented paths and cycles is given as follows: 

\begin{theorem}[La\"iche, Bouchemakh and Sopena~\cite{LBS15}]\mbox{}\\
Let $\OR{P_n}$ be any orientation of the path $P_n$. Then, for every $n\ge 2$,
$2\le \pcn(\OR{P_n}) \le 3$.
Moreover, $\pcn(\OR{P_n})=2$ if and only if 
one part of the bipartition of $P_n$
contains only sources or sinks in $\OR{P_n}$.
\label{th:or-Pn}
\end{theorem}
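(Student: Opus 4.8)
The plan is to obtain each of the three assertions as a short consequence of the results already established, treating the two numerical bounds and the characterization separately. For the upper bound I would combine Proposition~\ref{prop:oriented} with Theorem~\ref{th:goddard}: any orientation satisfies $\pcn(\OR{P_n})\le\pcn(P_n)$ by Proposition~\ref{prop:oriented}, and Theorem~\ref{th:goddard}(1) gives $\pcn(P_n)\le 3$ for every $n\ge 1$, whence $\pcn(\OR{P_n})\le 3$. For the lower bound, I would observe that every $\OR{P_n}$ with $n\ge 2$ contains a single arc as a subdigraph, whose two endpoints are at directed distance~$1$ and therefore cannot both receive color~$1$; this arc thus has packing chromatic number~$2$, and the hereditary property of Proposition~\ref{prop:subgraph} yields $\pcn(\OR{P_n})\ge 2$.

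For the equality $\pcn(\OR{P_n})=2$, the key observation is that $P_n$ is a tree, hence bipartite, so condition~(i) in the hypothesis of Proposition~\ref{prop:pcn2} is automatically satisfied. Applying Proposition~\ref{prop:pcn2} to $G=P_n$ then leaves exactly condition~(ii), namely that one part of the bipartition of $P_n$ contains only sources or sinks in $\OR{P_n}$, which is precisely the condition appearing in the statement. Combined with the bound $\pcn(\OR{P_n})\ge 2$ established above, which guarantees that ``$\pcn(\OR{P_n})=2$'' is the only alternative to ``$\pcn(\OR{P_n})=3$'', this gives the claimed equivalence.

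Since each ingredient is a direct invocation of an earlier result, I do not expect a genuine obstacle here. The only points deserving care are to verify the hypotheses of Proposition~\ref{prop:pcn2} verbatim — in particular that the bipartiteness of the path is exactly what renders condition~(i) vacuous — and to make sure the two numerical bounds are recorded first, so that the ``$=2$'' characterization is meaningful rather than stated in isolation.
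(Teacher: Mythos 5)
Your derivation is correct: the upper bound follows from Proposition~\ref{prop:oriented} together with Theorem~\ref{th:goddard}(1), the lower bound from the fact that two adjacent vertices are at directed distance~$1$, and the characterization of the value~$2$ is exactly Proposition~\ref{prop:pcn2} specialized to the (automatically bipartite) path. Note that the paper states this theorem as an imported result from~\cite{LBS15} without reproving it, but your argument is precisely the natural derivation from the preliminaries already recorded, so there is nothing to object to.
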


\begin{theorem}[La\"iche, Bouchemakh and Sopena~\cite{LBS15}]\mbox{}\\
Let $\OR{C_n}$ be any orientation of the cycle $C_n$. Then, for every $n\ge 3$,
$2\le \pcn(\OR{C_n}) \le 4$.
Moreover, 
$\pcn(\OR{C_n})=2$ if and only if $C_n$ is bipartite and one part of the bipartition 
contains only sources or sinks in $\OR{C_n}$,
and
$\pcn(\OR{C_n})=4$ if and only if $\OR{C_n}$ is a directed cycle (all arcs have the same direction), 
   $n\ge 5$ and $n\not\equiv 0\pmod 4$.
\label{th:or-Cn}
\end{theorem}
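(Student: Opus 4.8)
The plan is to settle the two numerical bounds and the value $\pcn=2$ almost immediately from the quoted results, and then to reduce the whole characterisation of the extreme values to a single dichotomy: either $\OR{C_n}$ is a directed cycle, or it is not. The lower bound $\pcn(\OR{C_n})\ge 2$ holds because $C_n$ has edges, so two adjacent vertices are at directed distance~$1$ and cannot both receive colour~$1$ (and $\pcn=1$ forces an edgeless graph). For the upper bound, Proposition~\ref{prop:oriented} gives $\pcn(\OR{C_n})\le\pcn(C_n)$, while $\pcn(C_n)\le 4$ by Theorem~\ref{th:goddard}(2); hence $\pcn(\OR{C_n})\le 4$. The characterisation of the case $\pcn(\OR{C_n})=2$ is nothing but Proposition~\ref{prop:pcn2} applied to $G=C_n$, so no further argument is needed there.

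For the value $4$, I would first treat the case where $\OR{C_n}$ is a directed cycle $v_0\to v_1\to\cdots\to v_{n-1}\to v_0$. Here the directed distance coincides with the undirected one: a shortest directed path from $v_i$ to $v_j$ has length $(j-i)\bmod n$ and one from $v_j$ to $v_i$ has length $(i-j)\bmod n$, so the weak directed distance is their minimum, which is exactly $d_{C_n}(v_i,v_j)$. Consequently $\pcn(\OR{C_n})=\pcn(C_n)$, and by Theorem~\ref{th:goddard}(2) this equals $4$ precisely when $n\ge 5$ and $n\not\equiv 0\pmod 4$ (the remaining cases $n=3$ and $n\equiv 0\pmod 4$ giving the value $3$). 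This yields the ``if'' direction of the characterisation of $\pcn=4$ and shows that a directed cycle never needs colour~$4$ outside this range.

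To finish, it remains to prove that every orientation of $C_n$ which is \emph{not} a directed cycle satisfies $\pcn(\OR{C_n})\le 3$; together with the directed-cycle analysis this gives the ``only if'' direction. Such an orientation has at least one \emph{source} $v_0$, with both incident arcs outgoing, say $v_0\to v_1$ and $v_0\to v_{n-1}$. The key structural remark is that no directed path can pass \emph{through} a source as an internal vertex (an internal vertex of a directed path has an incoming arc, whereas a source has none); hence deleting $v_0$ leaves the directed distance between any two remaining vertices unchanged. The graph so obtained is an oriented path $\OR{P_{n-1}}$ on $v_1,\dots,v_{n-1}$, which by Theorem~\ref{th:or-Pn} admits a packing $3$-colouring, and by the distance-preservation remark its restriction is a valid partial packing $3$-colouring of $\OR{C_n}$ on $V(\OR{C_n})\setminus\{v_0\}$.

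The one remaining point, which I expect to be the main obstacle, is to colour the source $v_0$ itself. Since $v_0$ has no incoming arc, a conflict can only arise along its two outgoing directed paths (towards $v_1$ and towards $v_{n-1}$, each stopping at the first sink it meets), so $v_0$ ``sees'' only a bounded set of vertices at directed distance $1$, $2$ or $3$. A short case analysis on the colours occurring on these few vertices --- choosing the $3$-colouring of $\OR{P_{n-1}}$ suitably near its two ends and, if necessary, recolouring locally --- should show that at least one of the colours $1,2,3$ is always available for $v_0$. Making this extension watertight, rather than the clean structural reduction that precedes it, is where the real care will be needed.
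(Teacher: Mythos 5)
First, a remark on scope: Theorem~\ref{th:or-Cn} is quoted here from~\cite{LBS15} and the present paper contains no proof of it, so there is no in-paper argument to compare yours against; I can only assess your proposal on its own terms. Your overall architecture is sound: the bounds $2\le\pcn(\OR{C_n})\le 4$ and the $\pcn=2$ characterisation do follow immediately from Propositions~\ref{prop:oriented} and~\ref{prop:pcn2} and Theorem~\ref{th:goddard}(2); the observation that weak directed distance coincides with undirected distance on a directed cycle is correct and settles that case; and the structural remark that a source (which exists exactly when the orientation is not a directed cycle) cannot be an internal vertex of any directed path, so that its deletion preserves all remaining weak directed distances, is both correct and the right reduction.

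The genuine gap is the final step, which you yourself flag: you never actually produce a colour for $v_0$, and this is where essentially all of the content of the claim ``every non-directed-cycle orientation satisfies $\pcn\le 3$'' resides. Invoking Theorem~\ref{th:or-Pn} as a black box gives you \emph{some} packing $3$-colouring of $\OR{P_{n-1}}$, but to extend it you need one with prescribed behaviour on $v_1,v_2,v_3$ and $v_{n-1},v_{n-2},v_{n-3}$, and packing $3$-colourings of paths are rigid enough that ``recolouring locally'' is not automatically available --- for instance, a maximal directed subpath of length $3$ (four vertices) admits no packing $3$-colouring whose two endpoints both avoid colour~$1$, so the natural plan ``always give $v_0$ colour $1$ by keeping $1$ off $v_1$ and $v_{n-1}$'' can fail and a real case analysis on the lengths of the segments incident to $v_0$ is unavoidable. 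To close the gap cleanly, push your structural remark further: two vertices of $\OR{C_n}$ are at finite weak directed distance if and only if they lie on a common maximal directed subpath (source-to-sink segment), and then their distance is the undirected distance along it; hence a packing $3$-colouring of $\OR{C_n}$ is exactly a vertex colouring whose restriction to each such segment is a packing $3$-colouring of the corresponding undirected path. The problem then becomes an explicit, finite pattern-matching exercise (choosing compatible colours for the sources and sinks and filling each segment with patterns such as $21(31\,21)^*$, $2131\ldots$, etc.), which must actually be carried out for the proof to stand.
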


\medskip

The \emph{generalized theta graph} $\Theta_{\ell_1,\dots,\ell_p}$ is the graph obtained
by identifying the end-vertices of %consists of two end-vertices %$u$ and $v$  joined by 
$p\ge 2$ %internally vertex-disjoint 
paths with respective lengths $1\le \ell_1\le\dots\le \ell_p$. 
(Since we only consider simple graphs, note here that we necessarily have $\ell_2\ge 2$.)
Packing colorings of undirected generalized theta graphs were considered by William and Roy in~\cite{WR13}
who gave some necessary condition for such a graph to have packing chromatic number~4.
In this paper, we determine the packing chromatic number of every  
undirected generalized theta graph.

Our paper is organized as follows.
In Section~\ref{sec:undirected} we provide tight lower and upper bounds on the packing chromatic
number of undirected generalized theta graphs and characterize
undirected generalized theta graphs with any given packing chromatic number.
In Section~\ref{sec:oriented}, we provide tight lower and upper bounds on the
packing chromatic number of oriented generalized theta graphs.

%%%%%%%%%%%%%%%%%%%%%%%%%%%%%%%%%%%%%%%%%%%%%%%%%%%%%%%%%%%%%%%%%%%%%%%%%%%%%%%%%%%%%%%%%%%%%%%%%%%%%%%%%%%%%%%%%%%%%%%%%%%%%
%%%%%%%%%%%%%%%%%%%%%%%%%%%%%%%%%%%%%%%%%%%%%%%%%%%%%%%%%%%%%%%%%%%%%%%%%%%%%%%%%%%%%%%%%%%%%%%%%%%%%%%%%%%%%%%%%%%%%%%%%%%%%
%%%%%%%%%%%%%%%%%%%%%%%%%%%%%%%%%%%%%%%%%%%%%%%%%%%%%%%%%%%%%%%%%%%%%%%%%%%%%%%%%%%%%%%%%%%%%%%%%%%%%%%%%%%%%%%%%%%%%%%%%%%%%
%%%%%%%%%%%%%%%%%%%%%%%%%%%%%%%%%%%%%%%%%%%%%%%%%%%%%%%%%%%%%%%%%%%%%%%%%%%%%%%%%%%%%%%%%%%%%%%%%%%%%%%%%%%%%%%%%%%%%%%%%%%%%
%%%%%%%%%%%%%%%%%%%%%%%%%%%%%%%%% THETA - UNDITECTED
\section{Undirected generalized theta graphs}
\label{sec:undirected}

In this section, we determine the packing chromatic number of undirected generalized theta graphs
$\Theta_{\ell_1,\dots,\ell_p}$. 
Since we only consider undirected graphs in this section, we will simply write 
generalized theta graph instead of undirected generalized theta graph.

In the rest of this paper, we  denote by $u$ and $v$ the end-vertices of the theta graph $\Theta_{\ell_1,\dots,\ell_p}$ and by $P_i=ux_i^1\dots x_i^{\ell_i-1}v$ the
corresponding paths of length $\ell_i$ for every $i$, $1\le i\le p$.
Moreover, we  denote by $n_\ell$, $\ell\ge 1$, the number of paths of length~$\ell$, that is
$$n_\ell=|\{i\ /\ 1\le i\le p,\ \ell_i=\ell\}|.$$

In order to describe $k$-colorings of paths, we  use \emph{color patterns}, given as words
on the alphabet $\{1,\dots,k\}$, using standard notation from Combinatorics on Words,
with $u^+=u^*u$ for every word $u$. Hence, for instance,
the color pattern $12(1312)^*4$ describes colorings of the form $124$, $1213124$, $1213121312\dots 4$.

We first prove the following general upper bound:

\begin{theorem}
For every  generalized theta graph $\Theta=\Theta_{\ell_1,\dots,\ell_p}$,
$p\ge 2$, %$1\le \ell_1\le\dots\le \ell_p$,
$$\pcn(\Theta)\le \max\{5, n_3+2\}.$$ 
Moreover, this upper bound is tight whenever $n_3\ge 3$.
\label{th:und-theta}
\end{theorem}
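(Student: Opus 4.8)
The plan is to prove the two assertions separately: an explicit packing coloring establishing the upper bound $\pcn(\Theta)\le\max\{5,n_3+2\}$, and a family of theta graphs realizing equality when $n_3\ge 3$.

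For the upper bound I would build the coloring from three structural observations. First, any shortest path joining two vertices lying on distinct internal paths $P_i,P_j$ must pass through $u$ or through $v$, so its length equals $\min\{d(w,u)+d(w',u),\ d(w,v)+d(w',v)\}$; consequently a vertex at distance at least $2$ (resp.\ at least $3$) from both $u$ and $v$ is at distance at least $4$ (resp.\ at least $6$) from every equally coloured vertex on another path, which makes colours $2,3$ (resp.\ $4,5$) automatically safe across branches. Second, the neighbours $x_i^1$ of $u$ are pairwise at distance $2$, and likewise the neighbours $x_i^{\ell_i-1}$ of $v$; hence each colour $\ge 2$ can occur at most once in each of these two neighbourhoods, which forces colour $1$ to carry almost all of them. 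Third, a path has length $3$ exactly when its two internal vertices $x_i^1,x_i^2$ are simultaneously a neighbour of $u$ and of $v$ while being adjacent to each other; thus for every length-$3$ path at least one internal vertex must receive a colour $\ge 2$, and since the $x_i^2$ of distinct length-$3$ paths are pairwise at distance $2$, these forced colours must be pairwise distinct — this is the origin of the $n_3$ term.

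Guided by this, the coloring I would use puts colour $1$ on a large independent set (all neighbours $x_i^1$ of $u$ together with the interior vertices at odd distance from the nearer hub), gives $u$ and $v$ two reserved colours, colours the interior of each long path with the periodic pattern $1,2,1,3,1,2,1,3,\dots$ read from the $u$-end and from the $v$-end with a local adjustment at the midpoint of odd-length paths, and finally assigns the forced vertices $x_i^2$ of the length-$3$ paths pairwise distinct colours. To hit the exact bound $n_3+2$ rather than $n_3+3$ in the length-$3$-dominated regime, I would colour one hub with colour $1$ (mirroring the optimal colouring of the extremal graph below): a hub of colour $c\ge 2$ would forbid the value $c$ on the adjacent $x_i^2$'s, whereas colouring $v$ with $1$ frees the palette $\{3,\dots,n_3+2\}$ for those $n_3$ vertices while $u$, coloured $2$, blocks only the value $2$. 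I expect the genuinely laborious part to be the short paths: a length-$1$ path makes $u,v$ adjacent and a length-$2$ path places a single vertex adjacent to both hubs, and both drastically shorten distances, so the generic pattern breaks down and the hub-colour roles must be chosen according to whether the graph is dominated by its length-$3$ paths (then $v=1$ works) or by its length-$2$ paths (where, as for $K_{2,p}=\Theta_{2,\dots,2}$, one instead wants the common neighbours of the hubs in colour $1$). These configurations, governed by $\ell_1$ and by the multiset of small lengths, are where the bulk of the case analysis lives, and checking that no more than $\max\{5,n_3+2\}$ colours are ever needed is the main obstacle.

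For tightness when $n_3\ge 3$ I would take $\Theta=\Theta_{3,\dots,3}$ with exactly $n_3$ paths, writing $a_i=x_i^1$ and $b_i=x_i^2$, so that $V(\Theta)=\{u,v\}\cup\{a_i,b_i : 1\le i\le n_3\}$ has $2n_3+2$ vertices and diameter $3$. Since the diameter is $3$, any colour $c\ge 3$ occurs at most once; the only pairs at distance exactly $3$ are $\{u,v\}$ and the $\{a_i,b_j\}$ with $i\ne j$, from which one checks that a colour-$2$ class has at most two vertices (three pairwise-distance-$3$ vertices do not exist); and a maximum independent set has size $n_3+1$, the only two such sets being $A=\{a_1,\dots,a_{n_3},v\}$ and $B=\{b_1,\dots,b_{n_3},u\}$. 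A packing $k$-colouring therefore covers at most $(n_3+1)+2+(k-2)$ vertices, forcing $k\ge n_3+1$; and $k=n_3+1$ would require equality throughout, hence a colour-$1$ class equal to $A$ or to $B$. But the complements $\{u,b_1,\dots,b_{n_3}\}$ and $\{v,a_1,\dots,a_{n_3}\}$ of $A$ and $B$ each consist of $n_3+1$ vertices that are pairwise at distance $2$, so each needs $n_3+1$ distinct colours drawn from the $n_3$ values $\{2,\dots,n_3+1\}$, which is impossible. Hence $\pcn(\Theta)\ge n_3+2$, and combined with the upper bound this yields equality, proving the claimed tightness.
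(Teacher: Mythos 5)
Your tightness argument is correct and takes a genuinely different route from the paper's. For $\Theta_{3,\dots,3}$ the paper runs a case analysis on the colours of the hubs ($\pi(u)=\pi(v)=1$; exactly one of them coloured $1$; neither coloured $1$), whereas you count: colour $1$ covers at most $n_3+1$ vertices, colour $2$ at most two (no three vertices are pairwise at distance $3$), every colour $\ge 3$ at most one since the diameter is $3$, and the borderline case $k=n_3+1$ is excluded because the complement of either maximum independent set consists of $n_3+1$ vertices that are pairwise at distance $2$. All the distance claims you rely on check out, and this packing-number count is arguably cleaner than the paper's case analysis.

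The upper bound, however, is where the substance of the theorem lies, and you have not proved it: you describe a plan and explicitly defer ``the bulk of the case analysis'' on short paths, which is a genuine gap rather than a presentational one. Moreover, the specific device you propose for reaching $n_3+2$ --- colouring one hub with $1$ --- would fail in general: if $\pi(v)=1$ then all $p$ neighbours of $v$ are pairwise at distance $2$ and none may receive colour $1$, so they need $p$ pairwise distinct colours, and a graph such as $\Theta_{3,3,3,8,8,\dots,8}$ (where $\max\{5,n_3+2\}=5$) would be driven far above the bound. Your premise that reserving colours $4$ and $5$ for the hubs only yields $n_3+3$ is also mistaken. The paper's proof sets $\varphi(u)=4$ and $\varphi(v)=5$, colours three length-$3$ paths with the patterns $4125$, $4215$, $4315$ (so their internal vertices consume only $\{1,2,3\}$), gives each further length-$3$ path a single fresh colour via $4165$, $4175$, \dots\ (totalling exactly $n_3+2$ colours), and covers every other length, including $\ell_i\in\{1,2,4,5,6\}$, with one explicit pattern per residue of $\ell_i$ modulo $4$, namely $4121(3121)^*5$, $41231(2131)^*5$, $41(2131)^*5$ and $412(3121)^*5$; the feared short-path casework thus largely evaporates, and the cross-branch verification reduces to noting that colour $2$ occurs on at most one neighbour of each hub and colour $3$ on at most one neighbour of $u$ and never within distance $2$ of $v$. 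To complete your argument you would need to exhibit such patterns and carry out that verification --- precisely the step you have postponed.
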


\begin{proof}
We first prove that $\pcn(\Theta)\le 5$ whenever $n_3\le 3$.
Let $\varphi:V(\Theta)\longrightarrow\{1,\dots,5\}$ be the mapping defined as follows:
\begin{enumerate}
\item $\varphi(u)=4$, $\varphi(v)=5$,
\item the (at most three) paths of length $3$ are colored using the distinct patterns $4125$, $4215$ and $4315$,
\item if $\ell_i\equiv 0\pmod 4$, $\ell_i\ge 4$, $\varphi(P_i)$ is defined by the pattern 
$4121(3121)^*5$,
\item if $\ell_i\equiv 1\pmod 4$, $\ell_i\ge 5$, $\varphi(P_i)$ is defined by the pattern 
$41231(2131)^*5$,
\item if $\ell_i\equiv 2\pmod 4$, $\varphi(P_i)$ is defined by the pattern 
$41(2131)^*5$,
\item if $\ell_i\equiv 3\pmod 4$, $\ell_i\neq 3$, $\varphi(P_i)$ is defined by the pattern 
$412(3121)^*5$.
\end{enumerate}

We claim that $\varphi$ is a packing 5-coloring of $\Theta$.
To see that, we will show that for any two distinct vertices $x$ and $y$ with $\varphi(x)=\varphi(y)=c$,
$c\in\{1,2,3\}$, we have $d_\Theta(x,y)>c$ (the case $c\in\{4,5\}$ does not need to be considered
since there is only one vertex with color 4 and one vertex with color 5). Note first that the restriction
of $\varphi$ to any path $P_i$ is a packing coloring of $P_i$.
Hence, we just need to consider the case when $x$ and $y$ do not belong to the same path.
If $c=1$, the property obviously holds since only internal vertices are colored with color 1.
Since at most one vertex with color 2 is adjacent to $u$ and at most one vertex with color 2 is adjacent
to $v$, the property also holds when $c=2$.
Since at most one vertex with color 3 is adjacent to $u$, no vertex with color~3 is at
distance~2 from $u$ and no vertex with color 3 is adjacent 
to $v$, the property also holds when $c=3$. Hence, $\varphi$ is a packing 5-coloring of $\Theta$.

Finally, when $n_3>3$, we color three paths of length 3 as above and the remaining ones
using distinct patterns of the form 4165, 4175, etc. Since each color $c>5$ is used only once,
we clearly get a packing ($n_3+2$)-coloring of $\Theta$. 

The fact that $\max\{5, n_3+2\}$ is a tight upper bound whenever $n_3\ge 3$ follows from Lemma~\ref{lem:und-theta-only-3}
proven below.
\end{proof}

We will now characterize  generalized theta
graphs with packing chromatic number $k$ for every $k\ge 3$. 
Since every generalized theta graph contains a cycle,
we know by Proposition~\ref{prop:subgraph} and Theorem~\ref{th:goddard}(2)
 that $\pcn(\Theta)\ge 3$ for every  generalized theta graph $\Theta$.
 Moreover, Theorem~\ref{th:goddard}(2) characterizes  generalized theta
graphs $\Theta_{\ell_1,\dots,\ell_p}$ with packing chromatic number 3 and 4 whenever $p=2$.
Therefore, unless otherwise specified, we will always consider $p\ge 3$ in the rest of this section.

The next lemma determines the packing chromatic number of  generalized theta graphs
of the form $\Theta_{3,\dots,3}$:

\begin{lemma}
Let $\Theta=\Theta_{\ell_1,\dots,\ell_p}$, $p\ge 3$, with $n_3=p$.%$\ell_1=\dots=\ell_p=3$.
We then have $\pcn(\Theta)=p+2$.
\label{lem:und-theta-only-3}
\end{lemma}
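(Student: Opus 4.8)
The plan is to establish the two inequalities $\pcn(\Theta)\le p+2$ and $\pcn(\Theta)\ge p+2$ separately. The upper bound is immediate from Theorem~\ref{th:und-theta}: since $n_3=p\ge 3$ we have $\max\{5,n_3+2\}=p+2$, so the explicit mapping built there is already a packing $(p+2)$-coloring of $\Theta$. All the work thus goes into the lower bound $\pcn(\Theta)\ge p+2$, and there is no circularity, since the construction in Theorem~\ref{th:und-theta} does not rely on the present lemma.

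For the lower bound I would first record the relevant distances in $\Theta=\Theta_{3,\dots,3}$, writing $P_i=ux_i^1x_i^2v$. A direct check gives $d(u,v)=3$ and $d(x_i^1,x_j^2)=3$ for $i\ne j$, while every other pair of vertices is at distance at most~$2$; in particular $\Theta$ has diameter~$3$. The decisive consequence is that any color $c\ge 3$ is used at most once, since two vertices sharing color~$c$ would need to be at distance $>c\ge 3$, which is impossible.

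Then I would run a counting argument. Fix a packing $k$-coloring with classes $V_1,\dots,V_k$. Every vertex outside $V_1\cup V_2$ carries a color $\ge 3$, and all these colors are pairwise distinct, so $k-2\ge |V(\Theta)|-|V_1|-|V_2|=2p+2-|V_1|-|V_2|$, that is, $k\ge 2p+4-|V_1|-|V_2|$. It therefore suffices to prove $|V_1|+|V_2|\le p+2$. Here $V_1$ is an independent set and $V_2$ is a set of vertices pairwise at distance $>2$, hence (by the distance computation) pairwise at distance exactly~$3$; analysing the only distance-$3$ pairs $\{u,v\}$ and $\{x_i^1,x_j^2\}$ shows $|V_2|\le 2$. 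For $V_1$ I would compute the independence number: dropping both $u$ and $v$ leaves at most one vertex per path, whereas keeping $u$ (resp.\ $v$) forces out every $x_i^1$ (resp.\ $x_i^2$); this gives $\alpha(\Theta)=p+1$, attained only by $A=\{u\}\cup\{x_1^2,\dots,x_p^2\}$ and $B=\{v\}\cup\{x_1^1,\dots,x_p^1\}$. Two cases then finish the bound: if $|V_1|\le p$ then $|V_1|+|V_2|\le p+2$ via $|V_2|\le 2$; and if $|V_1|=p+1$ then $V_1\in\{A,B\}$, so $V_2$ lies in the complement, all of whose vertices are pairwise at distance~$2$ (for $A$ one has $d(v,x_i^1)=d(x_i^1,x_j^1)=2$), forcing $|V_2|\le 1$ and again $|V_1|+|V_2|\le p+2$.

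The main obstacle is that the crude estimate $k\ge 2p+4-|V_1|-|V_2|$ combined with $|V_1|\le p+1$ and $|V_2|\le 2$ only yields $k\ge p+1$, losing one crucial unit. The essential, and most delicate, point is the interaction between the two largest color classes: saturating color~$1$ on a maximum independent set rigidifies its complement into a set whose vertices are pairwise at distance~$2$, which caps $|V_2|$ at~$1$. Pinning down the exact list $\{A,B\}$ of maximum independent sets is thus the heart of the argument.
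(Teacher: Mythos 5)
Your proof is correct, and both arguments ultimately rest on the same two observations: since $\Theta_{3,\dots,3}$ has diameter $3$, every color $c\ge 3$ is a singleton class, so everything reduces to bounding $|V_1|+|V_2|$ by $p+2$. The difference is in how the cases are organized. The paper splits on the colors of the end-vertices $u$ and $v$: when exactly one of them gets color $1$ it invokes the set $\{v,x_1^1,\dots,x_p^1\}$ of $p+1$ vertices that are pairwise at distance $2$ and none of which can be colored $1$, forcing $p+1$ distinct colors from $\{2,3,\dots\}$; when neither gets color $1$ it uses $|V_1|\le p$ and $|V_2|\le 2$ directly; when both get color $1$ the count is even larger. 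You instead split on $|V_1|$, which requires the extra step of classifying the maximum independent sets ($\alpha(\Theta)=p+1$, attained only by $\{u\}\cup\{x_i^2\}_i$ and $\{v\}\cup\{x_i^1\}_i$) and then observing that saturating color $1$ collapses the complement to a pairwise-distance-$2$ set, capping $|V_2|$ at $1$. Your version is a bit more work (the rigidity of the extremal independent sets must be proved, and you should note explicitly that an independent set containing both $u$ and $v$ has size $2<p+1$), but it is more uniform: a single inequality $k\ge 2p+4-|V_1|-|V_2|$ with two cases, rather than three ad hoc counts. The paper's mixed case, by contrast, gets the bound $p+2$ in one stroke from the distance-$2$ "clique". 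Both are valid; neither is circular with respect to Theorem~\ref{th:und-theta}, whose upper-bound construction is independent of this lemma.
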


\begin{proof}
By Theorem~\ref{th:und-theta}, we have $\pcn(\Theta)\le p+2$.
Therefore, it is enough to prove that for every packing $k$-coloring $\pi$ of $\Theta$, $k\ge p+2$.

If $\pi(u)=\pi(v)=1$ then at most two remaining vertices can be assigned color 2 and
all other remaining vertices must be assigned distinct colors, so that $\pi$ uses at least
$2(p-1)+2=2p\ge p+2$ colors.

If $\pi(u)=1$ and $\pi(v)\neq 1$, then
none of the vertices $x_i^1$, $1\le i\le p$, can be assigned color 1 and,
since the $p+1$ vertices $\{v,x_1^1,\dots,x_{p}^1\}$ are pairwise at distance 2, they must be assigned distinct colors
so that $\pi$ must use at least $p+2$ colors. The case $\pi(v)=1$ and $\pi(u)\neq 1$ is similar.

Finally, if $\pi(u)\neq 1$ and $\pi(v)\neq 1$, then
at most $p$ internal vertices can be assigned color 1 (one per path).
Since any two internal vertices are at distance at most 3 from each other and from $u$ and $v$, and no three such
vertices (including $u$ and $v$) are pairwise at distance 3 from each other, color 2 can be used at most twice,
so that $\pi$ must use at least $p+2$ colors.
\end{proof}

The following lemma characterizes  generalized theta graphs with packing
chromatic number $k$ for every $k>5$:

\begin{lemma}
Let $\Theta=\Theta_{\ell_1,\dots,\ell_p}$, $p\ge 3$, 
be a  generalized theta graph. 
Then, for every $k>5$, $\pcn(\Theta)=k$ if and only if $n_3=k-2$.
\label{lem:und-pcn-greater-than-5}
\end{lemma}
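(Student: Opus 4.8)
The plan is to reduce both implications to the single clean statement that $\pcn(\Theta)=n_3+2$ whenever $n_3\ge 4$, and then read off the equivalence. Observe first that the hypothesis $k>5$ always forces $n_3$ to be large: since $k-2>3$, the condition $n_3=k-2$ already entails $n_3\ge 4$; and on the other side, $\pcn(\Theta)=k>5$ cannot hold when $n_3\le 3$, since Theorem~\ref{th:und-theta} would then give $\pcn(\Theta)\le\max\{5,n_3+2\}\le 5$. Hence in both directions I may work under the assumption $n_3\ge 4$.

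For the central claim I would bound $\pcn(\Theta)$ from both sides. The upper bound is immediate from Theorem~\ref{th:und-theta}: since $n_3\ge 4$ gives $n_3+2\ge 6>5$, we have $\max\{5,n_3+2\}=n_3+2$, whence $\pcn(\Theta)\le n_3+2$. For the matching lower bound I would exhibit a suitable subgraph. Let $H$ be the subgraph of $\Theta$ consisting of $u$, $v$, and exactly the $n_3$ paths of length~$3$; then $H$ is isomorphic to a generalized theta graph $\Theta_{3,\dots,3}$ with $n_3\ge 4\ge 3$ paths. Lemma~\ref{lem:und-theta-only-3} applies to $H$ and yields $\pcn(H)=n_3+2$, and Proposition~\ref{prop:subgraph} then gives $\pcn(\Theta)\ge\pcn(H)=n_3+2$. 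Combining the two bounds proves $\pcn(\Theta)=n_3+2$ whenever $n_3\ge 4$.

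The equivalence follows at once. If $n_3=k-2$ with $k>5$, then $n_3\ge 4$ and the central claim gives $\pcn(\Theta)=n_3+2=k$. Conversely, if $\pcn(\Theta)=k>5$, then (as observed above) $n_3\ge 4$, so the central claim again gives $\pcn(\Theta)=n_3+2$, forcing $n_3+2=k$, i.e.\ $n_3=k-2$.

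There is really no hard step here: the lemma is essentially a packaging of Theorem~\ref{th:und-theta} with Lemma~\ref{lem:und-theta-only-3}. The only points requiring a little care are verifying that the maximum in the upper bound of Theorem~\ref{th:und-theta} is attained by $n_3+2$ precisely because $k>5$ forces $n_3\ge 4$, and checking that the extracted subgraph $H$ genuinely meets the hypotheses of Lemma~\ref{lem:und-theta-only-3} (at least three paths, all of length exactly~$3$), so that the known value $\pcn(\Theta_{3,\dots,3})=n_3+2$ may be transferred to $\Theta$ through the hereditary property of Proposition~\ref{prop:subgraph}.
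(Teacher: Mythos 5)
Your proof is correct and follows essentially the same route as the paper: the upper bound comes from Theorem~\ref{th:und-theta} (where $k>5$ forces $\max\{5,n_3+2\}=n_3+2$), and the matching lower bound comes from applying Lemma~\ref{lem:und-theta-only-3} to the $\Theta_{3,\dots,3}$ subgraph formed by the length-$3$ paths together with Proposition~\ref{prop:subgraph}. Your packaging via the intermediate claim ``$\pcn(\Theta)=n_3+2$ whenever $n_3\ge 4$'' is just a mild reorganization of the paper's two-line argument.
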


\begin{proof}
If $n_3=k-2$, we get $\pcn(\Theta)\le k$ by Theorem~\ref{th:und-theta}, 
and $\pcn(\Theta)\ge k$ by Lemma~\ref{lem:und-theta-only-3}
and Proposition~\ref{prop:subgraph}.

If $\pcn(\Theta)=k$, we get $n_3\ge k-2$ by Theorem~\ref{th:und-theta} and  $n_3\le k-2$ by Lemma~\ref{lem:und-theta-only-3}.
\end{proof}

%\attention{Le lemme pr\'ec\'edent est-il valide pour $k=5$ ?}

Generalized theta graphs
with packing chromatic number 3 are characterized as follows.

\begin{lemma}
Let $\Theta=\Theta_{\ell_1,\dots,\ell_p}$, $p\ge 2$, 
be a generalized theta graph.
We then have $\pcn(\Theta)=3$ if and only if 
one of the following conditions holds:
\begin{enumerate}[label={\rm (\roman*)}]
\item  $\ell_1=1$ and $\ell_2=\dots =\ell_p=2$, or
\item  for every $i$ and $j$,
$1\le i<j\le p$, $\ell_i+\ell_j\equiv 0\pmod 4$.
\end{enumerate}
\label{lem:und-pcn-3}
\end{lemma}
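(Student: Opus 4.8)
The plan is to prove both directions of the equivalence by analyzing how the three colors $\{1,2,3\}$ can be distributed. The key structural fact I will use repeatedly is that in $\Theta$ the two end-vertices $u,v$ together with the internal vertices satisfy tight distance constraints: any internal vertex adjacent to $u$ lies at distance at most $2$ from every other internal vertex adjacent to $u$, and $d(u,v)\le\ell_1$. In a packing $3$-coloring, color $3$ can be used on at most one vertex in any ball of radius $3$, color $2$ on at most one vertex in any ball of radius $2$, and color $1$ is a proper (distance-$>1$) coloring, i.e. an independent set. So the whole argument reduces to counting how much ``room'' the graph gives each color class.

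For the \textbf{sufficiency} direction (each condition forces $\pcn(\Theta)=3$), I would exhibit explicit packing $3$-colorings. In case (i), the graph is $\Theta_{1,2,\dots,2}$, which is obtained from the edge $uv$ by adding degree-two vertices all adjacent to both $u$ and $v$; coloring $u,v$ with $1$ and $2$ and all middle vertices with $3$ works since the middle vertices are pairwise at distance $2>\ldots$ — wait, they are at distance $2$, so color $3$ fails; instead I color $u=1$, $v=3$, and the internal vertices alternately, using the fact that only one internal vertex may receive $2$ and one may receive $3$, the rest being forced to $1$, which is impossible for $p\ge 3$ unless $\ell_1=1$ frees up $u,v$ to absorb colors — so the correct colorings must be read off carefully from the distance structure, coloring $u,v$ so that the short paths can reuse colors. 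For case (ii), the congruence $\ell_i+\ell_j\equiv0\pmod 4$ means all path-lengths share a common residue pattern so that the standard $3$-coloring pattern of a path (from Theorem~\ref{th:goddard}(1)) can be aligned at both ends $u,v$ consistently across all paths; I would write down the periodic pattern $1213$ (period $4$) oriented from $u$, check it closes up at $v$ precisely because the lengths match modulo $4$, and verify the cross-path distance conditions at $u$ and $v$ exactly as in the proof of Theorem~\ref{th:und-theta}.

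For the \textbf{necessity} direction (if $\pcn(\Theta)=3$ then (i) or (ii) holds), I would assume a packing $3$-coloring $\pi$ exists and derive the congruence or degenerate structure. The main case split is on the colors of $u$ and $v$. If neither $u$ nor $v$ gets color $1$, then (as in Lemma~\ref{lem:und-theta-only-3}) color $1$ appears at most once per path and color $2$ at most twice globally and color $3$ at most once globally, which caps the number of colorable vertices and forces the short/degenerate structure of case (i). If at least one of $u,v$ gets color $1$, then along each path the colors are severely constrained near that endpoint, and I would show that each path must carry a shifted copy of the period-$4$ pattern; requiring color $1$ to be an independent set across the cycles formed by any two paths $P_i\cup P_j$ forces the length condition $\ell_i+\ell_j\equiv0\pmod4$, recovering case (ii) via Theorem~\ref{th:goddard}(2) applied to each cycle $P_i\cup P_j\cong C_{\ell_i+\ell_j}$.

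I expect the \textbf{main obstacle} to be the necessity direction, specifically ruling out ``mixed'' colorings where different paths use patterns that are individually valid on their cycle $C_{\ell_i+\ell_j}$ but cannot be glued into a single coloring of all $p$ paths simultaneously. The subtlety is that Theorem~\ref{th:goddard}(2) tells us each pairwise cycle $C_{\ell_i+\ell_j}$ is $3$-colorable iff $\ell_i+\ell_j\equiv0\pmod4$ (for the relevant lengths), but a consistent global coloring must agree on the shared vertices $u$ and $v$; the heart of the proof is showing that this simultaneous compatibility is equivalent to all pairwise sums vanishing mod $4$, handling separately the boundary phenomena created by short paths (lengths $1$ and $2$) where the generic period-$4$ analysis breaks down and case (i) emerges.
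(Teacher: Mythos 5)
Your necessity argument is structurally broken. For $p\ge 3$, neither $u$ nor $v$ can ever receive colour~$1$ in a packing $3$-coloring: the at least three neighbours of $u$ are pairwise at distance at most~$2$, so if $\pi(u)=1$ none of them may receive colour~$1$, at most one may receive colour~$2$ and at most one colour~$3$, leaving some neighbour uncolourable. Hence the case to which you assign condition~(ii) is empty. Meanwhile, the consequences you claim in the complementary case --- colour~$1$ at most once per path, colour~$2$ at most twice globally, colour~$3$ at most once globally --- are false as soon as some path has length at least $4$: in $\Theta_{4,4,4}$ the colouring $\pi(u)=\pi(v)=2$ with each path coloured $21312$ is a valid packing $3$-coloring using colour~$1$ twice per path and colour~$3$ three times. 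So that case cannot ``force case~(i)''; it must also yield case~(ii), and your counting argument does not do this. The paper obtains necessity almost for free from heredity (Proposition~\ref{prop:subgraph}): any two paths $P_i,P_j$ induce a cycle $C_{\ell_i+\ell_j}$ whose restriction must be a packing $3$-coloring, so by Theorem~\ref{th:goddard}(2) every pairwise sum is $3$ or $\equiv 0\pmod 4$; a short separate argument then eliminates $\ell_1=1$, $\ell_2=2$, $\ell_p>2$. No ``gluing'' issue arises in this direction --- you have conflated it with sufficiency, where the compatibility at $u$ and $v$ is genuinely the point.

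On sufficiency you never actually exhibit a valid colouring. For~(i) your text is a sequence of false starts: setting $\pi(u)=1$ forces the $p-1$ middle vertices (pairwise at distance~$2$, all adjacent to $u$) into colours $\{2,3\}$, which is impossible for $p\ge 3$; the working colouring is simply $\pi(u)=2$, $\pi(v)=3$ and every middle vertex coloured~$1$. For~(ii), the pattern $1213$ started at $u$ assigns colour~$2$ to every neighbour $x_i^1$ of $u$, and these are pairwise at distance~$2$, so the cross-path check you defer to the end would fail. The hypothesis of~(ii) forces all $\ell_i\equiv 0\pmod 4$ or all $\ell_i\equiv 2\pmod 4$, and the patterns must begin and end with colours in $\{2,3\}$ while placing colour~$1$ on every neighbour of $u$ and of $v$, e.g.\ $(2131)^*2$ in the first subcase and $21(3121)^*3$ in the second, as in the paper.
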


\begin{proof}
By Theorem~\ref{th:goddard}(2), if $p=2$ then $\pcn(\Theta)=3$ if and only if
$\ell_1=1$ and $\ell_2=2$, or $\ell_1+\ell_2\equiv 0\pmod 4$. Therefore, assume $p\ge 3$.

We first prove that if $\ell_1=1$, $\ell_2=2$ and $\ell_p>2$ then $\pcn(\Theta)>3$.
Assume to the contrary that there exists a packing 3-coloring $\pi$ of $\Theta$.
Since $P_1$ and $P_2$ induce a cycle of length 3, we necessarily have 
$\pi(x_2^1)=\pi(x_p^1)=1$ and, without loss of generality, $\pi(u)=2$ and $\pi(v)=3$,
which implies that no color is available for $x_p^2$ since 
$d_\Theta(x_p^2,x_p^1)=1$, $d_\Theta(x_p^2,u)=2$ and $d_\Theta(x_p^2,v)\le 3$, a contradiction.

We know by Theorem~\ref{th:goddard}(2) that, for every $n\ge 3$,
$3\le\pcn(C_n)\le 4$ and $\pcn(C_n)=3$ if and only if $n=3$ or $n\equiv 0\pmod 4$.
Therefore, if $\Theta$ contains a cycle of length $\ell\not\equiv 0\pmod 4$, $\ell>3$,
then $\pcn(\Theta)>3$. Clearly, this happens whenever there exist $i$ and $j$,
$1\le i<j\le p$, with $\ell_i+\ell_j=\ell$.

Conversely, assume first that $\ell_1=1$ and $\ell_i=2$ for every $i$, $2\le i\le p$.
In that case, a packing 3-coloring $\pi$ of $\Theta$
is obtained by coloring each path of length~2 with $213$.
Assume now that for every $i$ and $j$,
$1\le i<j\le p$, $\ell_i+\ell_j\equiv 0\pmod 4$.
We have two cases to consider.
If $\ell_i\equiv 0\pmod 4$ for every $i$, $1\le i\le p$, a packing 3-coloring $\pi$ of $\Theta$
is obtained by coloring each path $P_i$ with the color pattern $(2131)^*2$.
If $\ell_i\equiv 2\pmod 4$ for every $i$, $1\le i\le p$, a packing 3-coloring $\pi$ of $\Theta$
is obtained by coloring each path $P_i$ with the color pattern $21(3121)^*3$.

This completes the proof.
\end{proof}

It remains to characterize  generalized theta graphs
with packing chromatic number 4 and 5.
Thanks to Theorem~\ref{th:goddard}(2), we do not need to consider cycles. 
The following series of lemmas will allow us to characterize 
generalized theta graphs (assuming $p\ge 3$) with packing chromatic number at most 4, depending
on the colors assigned to the end-vertices $u$ and $v$.

The first three lemmas characterize generalized theta graphs that admit a packing 4-coloring $\pi$
with $\pi(u)=\pi(v)=4$, 3 or 2.

%%%%%%%%%%%%%%%%%%%%%%%%%%%%%%%%%%%%%%%%%%%%%%%%%%%%%%%%%%%%%%%%%%%%%%%%%%%%%%%%%%%%%%%%%%%%%%%%%%%%%%%%

%%%%%%%%%%%%%%%%%%%%%%%%%%%%%%%% 4 -- 4
\begin{lemma}
Let $\Theta=\Theta_{\ell_1,\dots,\ell_p}$, $p\ge 3$, %$1\le \ell_1\le\dots\le \ell_p$,
be a generalized theta graph.
There exists a packing $4$-coloring $\pi$ of $\Theta$ with $\pi(u)=\pi(v)=4$
if and only if $n_1=n_2=n_3=n_4=0$.
\label{lem:und-theta-44}
\end{lemma}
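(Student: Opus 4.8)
The plan is to treat the two implications separately, the forward direction being immediate and the construction carrying all the weight.

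For the forward (``only if'') direction, I would argue directly from the definition of a packing coloring. If $\pi$ is a packing $4$-coloring of $\Theta$ with $\pi(u)=\pi(v)=4$, then since $u$ and $v$ are distinct vertices both receiving colour~$4$, the packing condition forces $d_\Theta(u,v)>4$. But the distance between the two end-vertices is exactly the length of a shortest $u$--$v$ path, namely $d_\Theta(u,v)=\ell_1$. Hence $\ell_1\ge 5$, and since $\ell_1\le\dots\le\ell_p$ this yields $\ell_i\ge 5$ for every $i$; in particular no path has length in $\{1,2,3,4\}$, i.e.\ $n_1=n_2=n_3=n_4=0$.

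For the backward (``if'') direction, assume $n_1=n_2=n_3=n_4=0$, so that $\ell_i\ge 5$ for every $i$. I would set $\pi(u)=\pi(v)=4$, which is legal since $d_\Theta(u,v)=\ell_1\ge 5>4$, and colour the interior of each path using only the colours $1,2,3$, guided by the single principle that $\pi(x_i^1)=\pi(x_i^{\ell_i-1})=1$ for every $i$ and that colour~$3$ is never placed at distance less than $2$ from $u$ or from $v$. The crucial observation, which I would isolate first, is that this principle trivialises all cross-path constraints: any two colour-$1$ vertices on distinct paths are at distance at least $2$; since positions $1$ and $\ell_i-1$ are the only internal vertices at distance $1$ from an end-vertex and both carry colour~$1$, every colour-$2$ vertex automatically lies at distance at least~$2$ from each end, so two colour-$2$ vertices on distinct paths are at distance at least $4>2$ whether measured through $u$ or through $v$; and likewise every colour-$3$ vertex lies at distance at least~$2$ from each end, so two colour-$3$ vertices on distinct paths are at distance at least $4>3$. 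Thus it suffices to colour each path independently so that the restriction to that path is a packing coloring meeting the two boundary conditions above.

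To finish, I would exhibit for each residue of $\ell_i\pmod 4$ an explicit colour pattern for $P_i$ satisfying these requirements, writing the whole pattern from $u$ to $v$: namely $4\,1\,2\,1\,(3\,1\,2\,1)^{*}\,4$ when $\ell_i\equiv 0$, then $4\,1\,2\,3\,1\,(2\,1\,3\,1)^{*}\,4$ when $\ell_i\equiv 1$, then $4\,1\,2\,1\,(3\,1\,2\,1)^{*}\,3\,1\,4$ when $\ell_i\equiv 2$, and $4\,1\,2\,3\,1\,(2\,1\,3\,1)^{*}\,2\,1\,4$ when $\ell_i\equiv 3$. Each of these assigns colour~$1$ to both neighbours of the end-vertices, confines colour~$3$ to positions at distance at least $2$ from both ends, and is routinely checked to be a valid packing coloring of the path (colour~$1$ on mutually non-adjacent vertices, colour~$2$ at pairwise distance at least~$4$, colour~$3$ at pairwise distance exactly~$4$). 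The hard part will be precisely this last verification, together with the book-keeping that every pattern genuinely keeps colour~$3$ away from the end-vertices for all admissible lengths $\ell_i\ge 5$; but once the cross-path reduction of the previous paragraph is established, assembling these per-path patterns into a global packing $4$-coloring with $\pi(u)=\pi(v)=4$ is immediate, which completes the construction.
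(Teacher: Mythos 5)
Your proof is correct and follows essentially the same route as the paper: the forward direction via the forced inequality $d_\Theta(u,v)>4$, and the converse via explicit colour patterns for each residue of $\ell_i$ modulo $4$ (your patterns differ only cosmetically from the paper's, and you spell out the cross-path reduction that the paper dismisses with ``clearly''). One small slip: in your $\ell_i\equiv 1$ and $\ell_i\equiv 3$ patterns the first two colour-$2$ vertices are at distance $3$, not ``at least $4$'' as your parenthetical claims, but since $3>2$ is all the packing condition requires, the colourings remain valid.
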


\begin{proof}
Suppose first that $\pi$ is a packing 4-coloring of $\Theta$ with $\pi(u)=\pi(v)=4$.
We then necessarily have $d(u,v)>4$, which implies $n_1=n_2=n_3=n_4=0$.

Conversely, suppose that $n_1=n_2=n_3=n_4=0$.
We can color each path $P_i$, $1\le i\le p$, of length $\ell_i\ge 5$, using the following patterns, 
depending on the value of $(\ell_i\mod 4)$:
\begin{itemize}
\item $4(1213)^+1214$, if $\ell_i\equiv 0\pmod 4$,
\item $413(1213)^*214$, if $\ell_i\equiv 1\pmod 4$,
\item $4(1213)^+14$, if $\ell_i\equiv 2\pmod 4$,
\item $4(1213)^+214$, if $\ell_i\equiv 3\pmod 4$.
\end{itemize}
The so-obtained 4-coloring is clearly a packing 4-coloring of $\Theta$.
\end{proof}

%%%%%%%%%%%%%%%%%%%%%%%%%%%%%%%% 3 -- 3
\begin{lemma}
Let $\Theta=\Theta_{\ell_1,\dots,\ell_p}$, $p\ge 3$, %$1\le \ell_1\le\dots\le \ell_p$,
be a generalized theta graph.
There exists a packing $4$-coloring $\pi$ of $\Theta$ with $\pi(u)=\pi(v)=3$
if and only if $n_1=n_2=n_3=0$, $n_5\le 2$ and $n_5+n_6\le 4$.
\label{lem:und-theta-33}
\end{lemma}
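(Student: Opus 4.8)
The claim is a characterization of generalized theta graphs admitting a packing 4-coloring $\pi$ with $\pi(u)=\pi(v)=3$. The plan is to prove both directions separately, following the template already established in Lemma~\ref{lem:und-theta-44}. For the forward direction, I would assume such a $\pi$ exists and extract the numerical constraints $n_1=n_2=n_3=0$, $n_5\le 2$ and $n_5+n_6\le 4$ by analyzing the local structure forced near $u$ and $v$. The key observation is that when $\pi(u)=\pi(v)=3$, color $3$ is completely used up on the end-vertices (since $d(u,v)\ge 2>$ is too weak, I must check $d(u,v)>3$ is \emph{not} required here because $u,v$ get the same color $3$, so in fact we need $d(u,v)>3$, forcing $\ell_1\ge 4$, hence $n_1=n_2=n_3=0$). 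No internal vertex can then receive color $3$ at all, so every internal vertex is colored from $\{1,2,4\}$.

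\textbf{Deriving the upper constraints on $n_5$ and $n_6$.} The heart of the forward direction is bounding $n_5$ and $n_6$. On a path $P_i$ of length $\ell_i$ with both endpoints colored $3$ and all internal colors drawn from $\{1,2,4\}$, I would examine which patterns are feasible. Color $4$ behaves like a ``scarce'' color: since color $4$ vertices must be pairwise at distance $>4$ and must also avoid being too close to $u$ or $v$, only a bounded number of color-$4$ vertices can sit near the ends. For $\ell_i=5$ and $\ell_i=6$, the short paths force at least one color-$4$ vertex close to an endpoint, and the interaction of these across several paths (all sharing $u$ and $v$) limits how many such short paths can coexist. I would count the available ``slots'' for color $4$ in the neighborhoods of $u$ and $v$ (vertices $x_i^1$ and $x_i^{\ell_i-1}$ and their successors) and show that paths of length $5$ and $6$ each consume such slots, yielding $n_5\le 2$ and $n_5+n_6\le 4$. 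This counting argument, reconciling the global distance-$>4$ constraint on color $4$ with the per-path requirements of short paths, is the step I expect to be the main obstacle: one must rule out clever colorings that might squeeze in a fifth short path, so a careful case analysis on the positions of color-$4$ vertices on paths of length $5,6$ is needed.

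\textbf{The converse: explicit constructions.} For the converse, assuming $n_1=n_2=n_3=0$, $n_5\le 2$ and $n_5+n_6\le 4$, I would set $\pi(u)=\pi(v)=3$ and exhibit explicit color patterns for each path depending on $\ell_i\bmod 4$, exactly as in the proof of Lemma~\ref{lem:und-theta-44} but now with endpoints colored $3$ and color $4$ used internally in a controlled way. For each long path ($\ell_i\ge 7$) the residue-based patterns on the alphabet $\{1,2,4\}$ with occasional $3$'s forbidden give packing colorings with endpoints $3$; for the finitely many short paths of lengths $4,5,6$ (whose multiplicities are bounded by the hypotheses) I would assign specific finite patterns, placing the few available color-$4$ vertices so that the distance-$>4$ condition holds globally. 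The verification is then routine: the restriction to each path is a valid packing coloring by construction, and for cross-path conflicts one checks only colors $1,2,4$, noting that color $1$ lives only on internal vertices, color $2$ appears at bounded distance from the ends, and color $4$ vertices are placed to remain pairwise at distance $>4$. Assembling these patterns compatibly under the stated numerical bounds completes the proof.
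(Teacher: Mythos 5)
Your overall plan (necessity from forced local structure near $u$ and $v$, sufficiency by explicit residue-based patterns) matches the paper's, but there are two genuine problems. First, your claim that ``no internal vertex can then receive color $3$ at all, so every internal vertex is colored from $\{1,2,4\}$'' is false: it holds only for internal vertices at distance at most $3$ from $u$ or from $v$, hence for \emph{all} internal vertices of paths of length at most $7$, but a middle vertex of a path of length $\ge 8$ is at distance $\ge 4$ from both end-vertices and may legitimately be colored $3$. This error is fatal to your converse as stated: a sufficiently long path cannot have its interior packing colored using only $\{1,2,4\}$. Indeed, at least every other internal vertex is not colored $1$, so consecutive non-$1$ internal vertices are at distance at most $2$; among any four such consecutive non-$1$ vertices, the pairwise distances rule out more than two $2$'s and more than one $4$ (and even the extremal assignments clash), so a path of length $\ge 9$ has no such coloring. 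The paper's patterns for $\ell_i\ge 7$, e.g.\ $3(1213)^+1213$, necessarily reuse color $3$ in the interior, and your construction must do the same; it then also has to be checked that these interior $3$'s do not conflict with $u$, $v$, or the interior $3$'s of other paths.

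Second, the heart of the necessity direction --- deriving $n_5\le 2$ and $n_5+n_6\le 4$ --- is not actually carried out: you describe a slot-counting strategy and explicitly flag it as ``the step I expect to be the main obstacle.'' The paper resolves it by enumerating the feasible colorings of paths of length $5$ and $6$ with both ends colored $3$ and interior in $\{1,2,4\}$ (each such coloring is forced to use color $4$ exactly once, at distance at most $2$ from $u$ or from $v$), and then applying the pairwise distance-$>4$ condition on color-$4$ vertices across paths, which also yields the refinements $n_6\le 2$ when $n_5=2$ and $n_6\le 3$ when $n_5=1$ needed to conclude $n_5+n_6\le 4$ rather than merely $n_5\le 2$ and $n_6\le 4$ separately. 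As written, your argument establishes the easy part ($n_1=n_2=n_3=0$) but leaves both the quantitative bounds and a workable sufficiency construction incomplete.
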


\begin{proof}
Suppose first that $\pi$ is a packing 4-coloring of $\Theta$ with $\pi(u)=\pi(v)=3$.
We then necessarily have $d(u,v)>3$, which implies $n_1=n_2=n_3=0$.
Note that we can only use colors 1, 2 and 4 for coloring the internal vertices of each
path $P_i$ with $\ell_i\le 7$, $1\le i\le p$. 
Therefore, each coloring of a path of length 5 must use once the color 4,
%a path of length 5 can only be colored 312413, 314213, 321423 or 324123,
which implies $n_5\le 2$, since otherwise we would have two
vertices with color 4 at distance at most 4 from each other. % (due to vertices with colour 4).
Similarly, a path of length 6 can only be colored 3121413, 3141213, 3121423, 3241213, 3124123, 3214213
or 3214123,
which implies $n_6\le 4$ (again, due to vertices with colour 4).
Moreover, we necessarily have $n_6\le 2$ whenever $n_5=2$ and $n_6\le 3$ whenever $n_5=1$,
which gives $n_5\le 2$ and $n_5+n_6\le 4$.

Conversely, suppose that $n_1=n_2=n_3=0$, $n_5\le 2$ and $n_5+n_6\le 4$.
We color each path of length~4 with 31213.
If $n_5=2$, we color the two paths
of length 5 with 312413 and 314213 and the (at most two) paths of length 6 with 3124123 and 3214213.
If $n_5=1$, we color the path of length 5 with 312413 and the (at most three)
paths of length 6 with 3141213, 3124123 and 3214213.
If $n_5=0$, we color the (at most four) paths of length 6 with 3121413, 3141213, 3124123 and 3214213.

Finally, we color each path $P_i$, $1\le i\le p$, of length $\ell_i\ge 7$, using the following patterns, 
depending on the value of $(\ell_i\mod 4)$:
\begin{itemize}
\item $3(1213)^+1213$, if $\ell_i\equiv 0\pmod 4$,
\item $3(1213)^+41213$, if $\ell_i\equiv 1\pmod 4$,
\item $3(1213)^+141213$, if $\ell_i\equiv 2\pmod 4$,
\item $3(1213)^*1241213$, if $\ell_i\equiv 3\pmod 4$.
\end{itemize}
The so-obtained 4-coloring is clearly a packing 4-coloring of $\Theta$.
\end{proof}

%%%%%%%%%%%%%%%%%%%%%%%%%%%%%%%% 2 -- 2
\begin{lemma}
Let $\Theta=\Theta_{\ell_1,\dots,\ell_p}$, $p\ge 3$, %$1\le \ell_1\le\dots\le \ell_p$,
be a generalized theta graph.
There exists a packing $4$-coloring $\pi$ of $\Theta$ with $\pi(u)=\pi(v)=2$
if and only if $n_1=n_2=0$ and one of the following conditions holds:
\begin{enumerate}[label={\rm (\roman*)}]
\item $n_3=1$ and $n_5+n_6+n_7=0$, or 
\item $n_3=0$ and $n_5+n_6+n_7\le 2$.
\end{enumerate}
\label{lem:und-theta-22}
\end{lemma}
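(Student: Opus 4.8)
The plan is to prove both implications, with the necessity direction resting on two scarcity facts about the colors $3$ and $4$ near the end-vertices. Throughout, let $B_u$ (resp. $B_v$) denote the set of vertices at distance at most $2$ from $u$ (resp. $v$), and recall that $\pi(u)=\pi(v)=2$ forces $d(u,v)>2$, hence $\ell_1\ge 3$ and $n_1=n_2=0$; this is the first necessary condition. The two facts I would isolate are: (A) any two vertices of $B_u$ are at distance at most $4$, so $B_u$ contains \emph{at most one} vertex colored $4$ (and likewise $B_v$); and (B) \emph{no neighbor of $u$, nor of $v$, is colored $3$}. To prove (B), note that since the neighbors of $u$ are pairwise at distance $2$, at most one is colored $3$, at most one is colored $4$, and none is colored $2$; hence at least $p-2\ge 1$ of the $x_i^1$ are colored $1$, and for each such path the vertex $x_i^2$, being adjacent to a $1$ and at distance $2$ from $u$, must be colored $3$ or $4$. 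A neighbor of $u$ colored $3$ would be at distance $3$ from each such $x_i^2$, forbidding color $3$ there, so every such $x_i^2$ would have to be colored $4$; but by (A) at most one can, forcing $p\le 2$, a contradiction.

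Fact (B) is the engine for the numeric conditions. First, a path of length $3$ has both internal vertices in $B_u\cap B_v$; they cannot both be $1$, cannot be $2$, and by (B) cannot be $3$, so one of them is colored $4$, and being in $B_u\cap B_v$ it is simultaneously the unique $4$ of $B_u$ and of $B_v$. Two length-$3$ paths would thus place two $4$'s within distance $2$ of $u$, contradicting (A), whence $n_3\le 1$. Next, under (B) I would check length by length that every path of length $5$, $6$ or $7$ uses a vertex colored $4$ inside $B_u\cup B_v$: a length-$5$ path admits no interior $2$ and cannot be colored with two $3$'s (they would be at distance at most $2$), so it needs a $4$, which lies in $B_u\cup B_v$ as all its internal vertices do; a length-$7$ path has no $4$-free coloring at all, since two $3$'s would have to sit among the four internal vertices not adjacent to $u$ or $v$, which span distance only $3$; a length-$6$ path with no $4$ would need $3$'s adjacent to both $u$ and $v$, excluded by (B); and a short case check shows the required $4$ of a length-$6$ or length-$7$ path cannot be confined to the one or two central vertices lying outside $B_u\cup B_v$. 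As $B_u$ and $B_v$ each contain at most one $4$, this yields $n_5+n_6+n_7\le 2$, and when $n_3=1$ the length-$3$ path already occupies the unique $4$ of both balls, forcing $n_5+n_6+n_7=0$. Together these give exactly conditions (i)--(ii).

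For sufficiency I would exhibit an explicit coloring with $\pi(u)=\pi(v)=2$. The global template colors every neighbor of $u$ and of $v$ with $1$, places color $3$ on the vertices at distance $2$ from $u$ or $v$ (pairwise at distance at least $4$ through $u$ or $v$), and uses color $2$ only on deep interior vertices, at distance at least $3$ from both ends and pairwise far apart. Concretely, length-$4$ paths get $2\,1\,3\,1\,2$ and paths of length $\ell\ge 8$ get a period-$4$ pattern depending on $\ell \bmod 4$ (built from $2\,1\,3\,1\,2\,1\,3\,1\,2$ and its analogues) using only colors $1,2,3$, so that no $4$ and no $3$ adjacent to an endpoint ever appears. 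In case (i) the single length-$3$ path is colored $2\,4\,1\,2$, putting its forced $4$ at distance $1$ from $u$. In case (ii) the at most two paths of length $5$, $6$ or $7$ are colored by patterns placing their $4$ at distance $1$ from $u$ for the first and at distance $1$ from $v$ for the second (for length $5$, for instance, $2\,4\,1\,3\,1\,2$ and $2\,1\,3\,1\,4\,2$); because these $4$'s sit on paths of length at least $5$, they are at distance at least $5$ apart, and they do not clash with the color-$3$ vertices, which have a different color.

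The main obstacle is the necessity direction, specifically establishing Fact (B) and then converting the ``two $3$'s or a $4$'' behaviour of lengths $5,6,7$ into the clean statement ``a $4$ inside $B_u\cup B_v$''. Both steps require careful distance bookkeeping in the overlapping balls $B_u$ and $B_v$, where internal vertices of short paths lie simultaneously near both end-vertices; once (A), (B) and these per-length demands are in hand, the final count against the at most two available $4$'s is immediate. The sufficiency is routine but needs the verification that the periodic interior patterns splice correctly with the fixed near-endpoint colors for each residue of $\ell\bmod 4$.
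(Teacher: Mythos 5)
Your necessity argument is a genuinely different route from the paper's (which enumerates the admissible colorings of each short path one length at a time): you isolate the two structural facts (A) and (B) about the balls $B_u,B_v$ and then count occurrences of color $4$. The per-length claims for $\ell=5,6,7$ do check out. However, the proof of Fact (B) has a counting gap: if a neighbor of $u$ is colored $3$, then at least $p-2$ of the $x_i^1$ are colored $1$ and each corresponding $x_i^2$ is forced to color $4$; combined with ``at most one $4$ in $B_u$'' this yields $p-2\le 1$, i.e.\ $p\le 3$, \emph{not} $p\le 2$. The configuration $p=3$ with one neighbor of $u$ colored $3$, one colored $4$ and one colored $1$ is therefore not excluded by your argument as written. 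It can be excluded --- the forced $4$ at $x_i^2$ is at distance $3$ from the neighbor of $u$ colored $4$ --- but since $p=3$ is an allowed and important case, you must add this subcase explicitly.

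The more serious gap is in sufficiency. You claim every path of length $\ell\ge 8$ can be colored by a period-$4$ pattern using only colors $1,2,3$, with both ends colored $2$, both neighbors of the ends colored $1$, and no $3$ adjacent to an end. For $\ell=9$ no such coloring exists at all: color $2$ is confined to internal positions $3,\dots,6$ and can appear at most twice, color $3$ (pairwise distance at least $4$) at most twice, and color $1$ (an independent set in $P_8$) at most four times, so the count $4+2+2=8$ is tight; checking the six admissible placements of the two $3$'s among $\{1,2,4,5,7,8\}$ shows the remaining four positions are never independent, so a $4$ is unavoidable. For $\ell\equiv 3\pmod 4$ a $\{1,2,3\}$-coloring exists but forces a $3$ adjacent to $u$ or $v$, which sits at distance $3$ from the $3$'s placed at distance $2$ from the ends on the other paths. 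The repair is what the paper does: place a single $4$ deep inside each such path (at distance at least $4$ from the nearer end and outside $B_u\cup B_v$), and then verify it is at distance greater than $4$ from the $4$'s used on the paths of length $3$, $5$, $6$ or $7$. As written, your construction does not yield a packing $4$-coloring whenever some $\ell_i\ge 9$ is odd, so the ``if'' direction is not established.
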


\begin{proof}
Suppose first that $\pi$ is a packing 4-coloring of $\Theta$ with $\pi(u)=\pi(v)=2$.
We then necessarily have $d(u,v)>2$, which implies $n_1=n_2=0$.
Note that we can only use colors 1, 3 and 4 for coloring the internal vertices of each
path $P_i$ with $\ell_i\le 5$, $1\le i\le p$. 
Therefore, 
a path of length 3 can only be colored either 2342 (or 2432), 2132 (or 2312) or 2142 (or 2412), which implies $n_3\le 2$. 

If $n_3=2$ then, without loss of generality, the two paths of length 3 are colored either 2132 and 2142,
or 2132 and 2412. In both cases, no path of length $\ell\ge 4$ can be colored since  only the color 1 is available
for the vertices at distance 1 and 2 from $v$. This implies $p=2$, contradicting the assumption $p\ge 3$.
 
If $n_3=1$, as observed above, the corresponding path of length 3 is either colored 
2342 (or 2432), 2132 (or 2312) or 2142 (or 2412).
In the former case (assume, without loss of generality, that the path is colored 2342), 
every other vertex at distance~1 or~2 must be colored~1, which implies 
that only one additional path of length 2 may occur, contrary to the assumption $p\ge 3$.
In the second case (assume, without loss of generality, that the path is colored 2132),
every other vertex at distance at most~2 from $v$ must be  colored 1 or 4, which implies $\sum_{\ell\ge 4}n_\ell\le 1$,
so that $p=2$, contrary to the assumption $p\ge 3$.
The corresponding path of length 3 is thus colored 2142 (or 2412), so that
every other vertex at distance at most 2 from $u$ or $v$ must then be colored 1 or~3.
There is no such coloring for a path of length 5,
only one such coloring for a path of length 6, namely 2314132
(but this coloring is not valid since color~4 would be used on two vertices
at distance~4 from each other),
and two such colorings for a path of length 7, up to symmetry, namely 23124132
and 21324132. Since each of these colorings uses color 3 on a neighbor of $u$ or $v$
we necessarily have $n_5+n_6+n_7\le 1$.
If $n_5+n_6+n_7=1$ then, again, no other path can be colored since  only the color 1 is available
for the vertices at distance 1 and 2 from $v$ (or $u$), which implies $p=2$, contradicting the assumption $p\ge 3$.
Therefore, $n_5+n_6+n_7=0$ and condition (i) is satisfied.

If $n_3=0$ then, as observed above, every path of length 5, 6 or 7 must contain a vertex with color~4
at distance at most~2 from $u$ or $v$ since $p\ge 3$.
Therefore, at most two such paths can occur, that is $n_5+n_6+n_7\le 2$,
and thus condition (ii) is satisfied.

Conversely, suppose that $n_1=n_2=0$.
If $n_3=1$ and $n_5+n_6+n_7=0$, we color the path of length 3 with 2142 and every path of length 4 with 21312.
We then color each path $P_i$, $1\le i\le p$, of length $\ell_i\ge 8$, using the following patterns, 
depending on the value of $(\ell_i\mod 4)$:
\begin{itemize}
\item $2(1312)^+1312$, if $\ell_i\equiv 0\pmod 4$,
\item $2(1312)^+41312$, if $\ell_i\equiv 1\pmod 4$,
\item $2(1312)^+121312$, if $\ell_i\equiv 2\pmod 4$, 
\item $2(1312)^+4121312$, if $\ell_i\equiv 3\pmod 4$.
\end{itemize}

If $n_3=0$ and $n_5+n_6+n_7\le 2$ we first color
each path $P_i$, $1\le i\le p$, of length $\ell_i$, $4\le \ell_i\le 7$, as follows:
\begin{itemize}
\item $21312$, if $\ell_i=4$,
\item $213412$ or $214312$, if $\ell_i=5$,
\item $2131412$ or $2141321$, if $\ell_i=6$, 
\item $21321412$ or $21412312$, if $\ell_i=7$.
\end{itemize}
Note that if $n_5+n_6+n_7=2$ the two corresponding paths must use the patterns $214\ldots 2$ and $2\ldots 412$
so that the distance between the two vertices with color 4 is at least 5.
We then color each path $P_i$, $1\le i\le p$, of length $\ell_i\ge 8$, 
depending on the value of $(\ell_i\mod 4)$ as in the previous case.

The so-obtained 4-coloring is clearly a packing 4-coloring of $\Theta$.
\end{proof}

%%%%%%%%%%%%%%%%%%%%%%%%%%%%%%%%%%%%%%%%%%%%%%%%%%%%%%%%%%%%%%%%%%%%

The next three lemmas characterize generalized theta graphs that admit a packing 4-coloring $\pi$
with $\pi(u),\pi(v)\in\{2,3,4\}$, $\pi(u)\neq\pi(v)$.

%%%%%%%%%%%%%%%%%%%%%%%%%%%%%%%% 3 -- 4
\begin{lemma}
Let $\Theta=\Theta_{\ell_1,\dots,\ell_p}$, $p\ge 3$, %$1\le \ell_1\le\dots\le \ell_p$,
be a generalized theta graph.
There exists a packing $4$-coloring $\pi$ of $\Theta$ with $\pi(u)=3$ and $\pi(v)=4$
if and only if one of the following conditions holds:
\begin{enumerate}[label={\rm (\roman*)}]
\item $n_1\le 1$, $n_3\le 2$ and $n_5=n_6=0$, or 
\item $n_1=0$, $n_3\le 2$ and ($n_5=0$ or $n_5+n_6\le 1$).
\end{enumerate}
\label{lem:und-theta-34}
\end{lemma}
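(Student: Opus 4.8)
The plan is to follow the same two-step scheme as in Lemmas~\ref{lem:und-theta-44}--\ref{lem:und-theta-22}: first extract the numerical constraints from an arbitrary packing $4$-coloring $\pi$ with $\pi(u)=3$ and $\pi(v)=4$ (necessity), then exhibit explicit color patterns realizing each admissible configuration (sufficiency). Throughout I write $d=d_\Theta(u,v)=\ell_1$ and use that, since $\Theta$ is simple, $n_1\le 1$ automatically. The basic tool is the distance formula $d_\Theta(u,x_i^j)=\min\{j,\,d+\ell_i-j\}$ together with its mirror $d_\Theta(v,x_i^j)=\min\{\ell_i-j,\,d+j\}$, from which I read off where the two ``heavy'' colors may appear: an internal vertex $x_i^j$ may receive color $3$ only if $d_\Theta(u,x_i^j)>3$, and color $4$ only if $d_\Theta(v,x_i^j)>4$. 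Since $u$ already occupies the unique color $3$ in its ball of radius $3$, and $v$ the unique color $4$ in its ball of radius $4$, these thresholds are exactly what couple the colorings of distinct paths.

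First I would pin down the short paths. For $\ell_i=2$ the middle vertex lies in $\{1,2\}$, and color-$1$ middles never conflict, so $n_2$ is unconstrained. For $\ell_i=3$ both internal vertices lie in $\{1,2\}$ and must differ, forcing the pattern $3124$ or $3214$; the color-$2$ vertex then sits next to $v$ (resp.\ next to $u$), and two patterns of the same type place two color-$2$ vertices at distance $2$, whence $n_3\le 2$. For $\ell_i=4$ a short check shows the internal word must be $121$, i.e.\ the pattern $31214$, again with no global obstruction, so $n_4$ is free. The decisive analysis is for $\ell_i\in\{5,6,7\}$. I would show that the four internal vertices of a length-$5$ path cannot be colored with $\{1,2\}$ alone, that color $4$ never fits, and that the only admissible heavy vertex is the neighbor of $v$, which by the distance formula can carry color $3$ exactly when $d\ge 3$; two such color-$3$ vertices lie at distance $2$, giving $n_5\le 1$. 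For $\ell_i=6$ the forced color-$3$ vertex sits at distance $2$ from $v$ (available once $d\ge 2$), and since color $4$ alone still cannot fill the interior, this placement is unavoidable; two length-$6$ color-$3$ vertices are at distance $4$ and so coexist (explaining why $n_6$ is otherwise unbounded), whereas a length-$5$ and a length-$6$ color-$3$ vertex are at distance exactly~$3$, which is forbidden. This is the heart of the argument: it yields ``$n_5=0$, or $n_5+n_6\le 1$'', while the observation that a length-$1$ path ($d=1$) shrinks every around-distance enough to kill both heavy colors on length-$5$ and length-$6$ paths forces $n_5=n_6=0$ in that case. Length-$7$ paths, by contrast, always admit a pattern whose single color-$3$ vertex avoids all the conflicts above, so they impose no new constraint.

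For sufficiency I would simply produce the patterns and verify the packing condition. Paths of length $2,3,4$ receive the words found above, distributing the two length-$3$ patterns and the color-$1$ middles so that color $2$ appears at most once beside $u$ and once beside $v$; the single admissible length-$5$ path (if any) is colored $312134$, placing its color $3$ next to $v$; length-$6$ paths are colored $3121314$, keeping their color-$3$ vertices two steps from $v$; and paths of length $\ge 7$ are colored by eventually periodic patterns of the type used in Lemma~\ref{lem:und-theta-33}, with the few occurrences of colors $3$ and $4$ pushed into the interior. The only delicate verification is the mutual placement, near $v$, of the color-$3$ vertices arising from the (at most one) length-$5$ path and from the length-$6$ paths, which is precisely what the inequalities in (i)--(ii) are designed to guarantee.

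The step I expect to be the main obstacle is exactly this interaction of the heavy colors near $v$ on paths of lengths $5$ and $6$, compounded by the fact that the availability thresholds depend on $d=d_\Theta(u,v)$ rather than on $n_1$ alone: a length-$5$ path needs $d\ge 3$ while a length-$6$ path only needs $d\ge 2$. Making the necessity airtight therefore calls for a clean case split according to the value of $d$ --- equivalently, according to which of the short paths of lengths $1$ and $2$ are present --- and it is in reconciling this $d$-dependence with the stated conditions (i) and (ii) that I would be most careful.
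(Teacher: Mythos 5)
Your plan follows the same route as the paper's proof: enumerate the admissible words on each short path ($3124$ or $3214$ for length~$3$, hence $n_3\le 2$; $31214$ for length~$4$; a single colour-$3$ vertex forced next to $v$ for length~$5$ and two steps from $v$ for length~$6$), read off the constraints on $n_5$ and $n_6$ from the pairwise distances of these colour-$3$ vertices through $v$, and then exhibit explicit patterns for the converse. As written, however, it is a programme rather than a proof --- the decisive necessity step and the verification of the converse patterns are only announced --- and the point you yourself flag as the main obstacle is a genuine gap, not a matter of bookkeeping.

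Indeed, your observation that the pattern $312134$ on a length-$5$ path requires $d_\Theta(u,v)\ge 3$ (its colour-$3$ vertex is at distance $\min\{4,\ell_1+1\}$ from $u$) cannot be reconciled with condition~(ii), which only imposes $n_1=0$, i.e.\ $d_\Theta(u,v)\ge 2$. The graph $\Theta_{2,2,5}$ satisfies (ii) with $n_1=0$, $n_3=0$ and $n_5+n_6=1$, yet every internal vertex of its length-$5$ path is at distance at most $3$ from $u$ and at most $4$ from $v$ (via the length-$2$ paths), so neither colour $3$ nor colour $4$ is available there, and four consecutive vertices cannot be packed with colours $1$ and $2$ alone; hence no packing $4$-coloring with $\pi(u)=3$ and $\pi(v)=4$ exists. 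So the statement itself is slightly off: the necessity direction yields the additional constraint $n_5=0$ whenever $n_2\ge 1$, which condition (ii) omits, and the paper's own proof asserts that the only possible coloring of a path of length $5$ is $312134$ without checking its validity against the shortcut through a length-$2$ path. Your instinct to case-split on $d=d_\Theta(u,v)$ rather than on $n_1$ alone is the right one; carrying it through produces a corrected condition (ii) rather than the stated one. (Graphs such as $\Theta_{2,2,5}$ still admit packing $4$-colorings with other colours on $u$ and $v$, e.g.\ via condition (i) of Lemma~\ref{lem:und-theta-24}, so Theorem~\ref{th:und-pcn-4} is not obviously affected, but Lemma~\ref{lem:und-theta-34} needs the amendment.)
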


\begin{proof}
Suppose first that $\pi$ is a packing 4-coloring of $\Theta$ with $\pi(u)=3$ and $\pi(v)=4$.

There are only two possible colorings of a path of length 3, namely 3124 and 3214, which implies
that we can have at most two such paths (otherwise, we would have two vertices with color 2 at distance 2
from each other).

Suppose first that $n_1=1$.
In that case, since every internal vertex of a path of length 5 or 6 is at distance at most~3
from $u$ and $v$, the only available colors for these vertices are 1 and 2, so that $n_5+n_6=0$
and condition (i) is satisfied.

Suppose now that $n_1=0$.
Since the only possible coloring of a path of length 5 is 312134, 
we necessarily have $n_5\le 1$.
Consider the possible colorings of a path of length~6.
Since color 4 can only be used on the neighbor of $u$, the four other internal
vertices must use color 3 and thus color 3 has to be used on a vertex at distance at most 2 from $v$.
This implies $n_6=0$ whenever $n_5=1$ and thus condition (ii) is satisfied.

We finally prove that for every generalized theta graph satisfying any of these conditions,
there exists a packing 4-coloring $\pi$ with $\pi(u)=3$ and $\pi(v)=4$.
Every path of length 2 can be colored 314 and every path of length 4 can be colored 31214.
If $n_3=1$ the path of length 3 can be colored 3124, and if $n_3=2$ the paths of length 3 can be colored 3124 and 3214.
If $n_1=0$ and $n_5=1$, the path of length 5 can be colored 312134.
If $n_1=0$ and $n_5=0$, the path of length 6 can be colored 3121314.

It remains to prove that every path $P_i$, $1\le i\le p$, of length $\ell_i\ge 7$ can be colored. This can be
done using the following patterns, depending on the value of $(\ell_i\mod 4)$:
\begin{itemize}
\item $3(1213)^+1214$ if $\ell_i\equiv 0\pmod 4$,
\item $31214312(1312)^*14$ if $\ell_i\equiv 1\pmod 4$,
\item $31214(1312)^+14$ if $\ell_i\equiv 2\pmod 4$,
\item $3(1213)^+214$ if $\ell_i\equiv 3\pmod 4$.
\end{itemize}
The so-obtained 4-coloring is clearly a packing 4-coloring of $\Theta$.
\end{proof}

%%%%%%%%%%%%%%%%%%%%%%%%%%%%%%%% 2 -- 4
\begin{lemma}
Let $\Theta=\Theta_{\ell_1,\dots,\ell_p}$, $p\ge 3$, %$1\le \ell_1\le\dots\le \ell_p$,
be a generalized theta graph.
There exists a packing $4$-coloring $\pi$ of $\Theta$ with $\pi(u)=2$ and $\pi(v)=4$
if and only if one of the following conditions holds:
\begin{enumerate}[label={\rm (\roman*)}]
\item $n_1\le 1$ and $n_3=n_7=0$, or 
\item $n_1\le 1$, $n_3=n_4=0$, $n_7\le 1$ and $n_8=0$, or 
\item $n_1\le 1$, $n_3\le 1$ and $n_4=n_7=n_8=0$, or 
\item $n_1=n_2=n_3=0$ and $n_7\le 1$, or
\item $n_1=n_2=n_3=n_4=0$, $n_7=2$ and $n_8=0$, or
\item $n_1=n_2=0$, $n_3\le 1$, $n_4=0$ and $n_7+n_8\le 1$, or
\item $n_1=n_3=n_7=0$, or
\item $n_1=n_3=n_4=0$, $n_7\le 1$ and $n_8=0$, or
\item $n_1=0$, $n_3\le 1$ and $n_4=n_7=n_8=0$.
\end{enumerate}
\label{lem:und-theta-24}
\end{lemma}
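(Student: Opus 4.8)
The plan is to prove both implications by the same method as in Lemmas~\ref{lem:und-theta-44}--\ref{lem:und-theta-34}: read off necessary inequalities from the scarcity of colours~$3$ and~$4$ near the two end-vertices, then exhibit explicit colour patterns realising each of the nine cases. Throughout, $\pi(u)=2$ forbids colour~$2$ on every vertex at distance at most~$2$ from~$u$, while $\pi(v)=4$ forbids a second vertex of colour~$4$ at distance at most~$4$ from~$v$; colour~$3$ stays globally constrained, any two vertices of colour~$3$ lying at distance more than~$3$. The first step is to record $n_1\le 1$ (simplicity) and to classify, for each short length, the admissible colourings of a single path $P_i$ with $\varphi(u)=2$, $\varphi(v)=4$. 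For instance a path of length~$3$ can only be coloured $2134$ or $2314$, each using one vertex of colour~$3$ within distance~$2$ of an end-vertex; a direct computation then shows that any two such paths place two colour-$3$ vertices at distance exactly~$3$ through $u$ or through~$v$, whence $n_3\le 1$ in every case.

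The heart of the argument is the placement of colour~$4$. Writing $\ell_1=d(u,v)$, an interior vertex $x_i^j$ can receive colour~$4$ only when $d(x_i^j,v)=\min(\ell_i-j,\,j+\ell_1)\ge 5$, which is solvable in~$j$ exactly when $\ell_i\ge 10-\ell_1$. Consequently a path of length~$7$ or~$8$ can absorb an interior colour~$4$ only when $\ell_1$ is large enough, i.e. only when enough short lengths are absent. This is precisely the dichotomy separating the families $n_1\le 1$ (conditions (i)--(iii), where $\ell_1$ may equal~$1$ and lengths~$7,8$ are most restricted) from the families $n_1=n_2=0$, hence $\ell_1\ge 3$ (conditions (iv)--(vi), relaxing the bounds on $n_7,n_8$), with (vii)--(ix) the $n_1=0$ refinements. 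I would next show that paths of length~$5$ and~$6$ are always freely colourable over $\{1,2,3\}$ with a single colour-$3$ vertex placed at distance~$2$ from~$u$ (e.g. $213214$ for length~$5$ and $2131214$ for length~$6$), keeping all colour-$2$ and colour-$3$ vertices pairwise far enough even when $\ell_1=1$, so that $n_5,n_6$ are unconstrained; and that every path of length at least~$9$ is colourable by a period-four pattern carrying one interior colour~$4$, exactly as in the $\ell_i\ge 8$ patterns of Lemma~\ref{lem:und-theta-22}. This reduces the whole problem to bookkeeping the lengths $1,2,3,4,7,8$, the colour-$3$ slots near~$u$ and near~$v$ being the shared scarce resource.

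For necessity I would then count, in any packing $4$-colouring, the colour-$3$ vertices forced near each end-vertex together with the at-most-one extra colour-$4$ vertex, deriving the stated inequalities according to how many length-$3,4,7,8$ paths compete for those slots; the fact that length-$4$ and length-$8$ paths each admit a colouring with colour-$3$ vertices only at distance~$2$ from~$u$ and~$v$ (pairwise at distance~$4$) explains why $n_4,n_8$ stay free when no length-$3$ or length-$7$ path is present, and are forced to~$0$ otherwise. For sufficiency I would construct, in each of the nine cases, an explicit colouring: the long paths by the periodic patterns above, the finitely many short paths by hand, checking that all colour-$3$ vertices remain pairwise at distance more than~$3$ and the unique additional colour-$4$ vertex at distance more than~$4$ from~$v$.

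The step I expect to be the main obstacle is the tight case $n_7=2$, $n_8=0$ of condition~(v): here $\ell_1\ge 5$ permits an interior colour~$4$ only at position~$1$ or~$2$ of a length-$7$ path, so two such colour-$4$ vertices would lie at distance at most~$4$ through~$u$. Hence colour~$4$ may decorate \emph{at most one} of the two length-$7$ paths, forcing the other to use $\{1,2,3\}$ with two colour-$3$ vertices, and realising both simultaneously demands placing these colour-$3$ vertices (one per end-region) so as to dodge the distance-$3$ collisions through~$u$ and through~$v$ -- an arrangement analogous to the opposite colour-$4$ placement used for the two length-$5$ paths in Lemma~\ref{lem:und-theta-33}. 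The same obstruction shows that three interior-constrained short paths cannot coexist, yielding $n_7\le 2$ and closing the necessity bounds.
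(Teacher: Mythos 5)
Your overall strategy coincides with the paper's: classify the admissible colorings of each short path given $\pi(u)=2$, $\pi(v)=4$, extract counting constraints from colour-$3$ and colour-$4$ collisions through the end-vertices, and then realize each case with explicit periodic patterns for long paths. Several of your individual observations are correct and match the paper ($n_3\le 1$; the criterion $\min(\ell_i-j,\,j+\ell_1)\ge 5$ for an interior colour $4$; the analysis of the two length-$7$ paths in condition (v), where at most one of them can carry an interior $4$ and the other must be $21312134$).

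However, your necessity sketch contains a step that is concretely wrong and would yield a characterization strictly more restrictive than the one you are asked to prove: you claim that $n_4$ and $n_8$ ``are forced to $0$'' whenever a length-$3$ or length-$7$ path is present. Conditions (iv) and (vi) of the statement itself refute this: under (iv) one may have $n_7=1$ together with arbitrary $n_4$ and $n_8$, and under (vi) one may have $n_3=1$ together with $n_8=1$. The mechanism you are missing is that when $\ell_1\ge 3$ a length-$7$ path admits the coloring $21431214$ and a length-$8$ path admits $214131214$: by spending their interior colour $4$ (legal precisely because $\ell_1\ge 3$ pushes that vertex to distance $5$ from $v$), these colorings relocate their unique colour-$3$ vertex to distance $3$ or $4$ from $u$ and at least $4$ from $v$, so it no longer competes with the colour-$3$ vertices of length-$4$ and length-$8$ paths sitting at distance $2$ from the ends. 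Tracking exactly which of the two candidate colorings of each length-$7$ and length-$8$ path is available --- as a function of $n_1$, $n_2$, $n_3$, $n_4$ --- is what produces the split into the nine conditions; this is the content you defer to ``bookkeeping,'' but it is where the proof actually lives, and your stated summary of its outcome is inconsistent both with the lemma and with your own earlier remark that (iv)--(vi) relax the bounds on $n_7,n_8$. As written, following your plan literally would not recover conditions (iv) and (vi).
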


\begin{proof}
Suppose first that $\pi$ is a packing 4-coloring of $\Theta$ with $\pi(u)=2$ and $\pi(v)=4$.

Since every path of length~3 can be colored either 2134 or 2314, we 
necessarily have $n_3\le 1$ (otherwise, we would have two vertices with color~3 at distance~3
from each other).
Moreover, every path of length~4 can be colored either 23124, 21314 or 21324.
If a path of length~4 is colored 23124, every other vertex at distance at most~2
from $u$ must be colored 1 or 4, which implies that at most one additional path may occur,
contradicting our assumption $p\ge 3$.
If a path of length~4 is colored 21314 or 21324, therefore
using color~3 at distance~2 from $u$ and $v$, we necessarily have $n_4=0$
whenever a path uses color 3 on a neighbor of $u$ or $v$ (thus, in particular if $n_3=1$).

Suppose that $n_1=1$ and consider the possible colorings of a path of length 7.
On its internal vertices, color 4 cannot be used, color 3 can  be used only twice,
color 2 can be used only once
 and color 1 can be used three times.
Therefore, the only possible colorings of a path of length 7 are 21312134 and 23121314.
We thus necessarily have $n_7\le 1$, and $n_7=0$ whenever $n_3=1$ or $n_4=1$, otherwise we would have two vertices with color 3
at distance 2 or 3 from each other.
%%P8
Similarly, for the internal vertices of a path of length 8,
color 4 cannot be used, 
color 3 and 2 can both be used at most twice,
 and color 1 can be used at most four times.
Therefore, the only colorings of a path of length 8 are 213121314 and 231213214.
We thus necessarily have $n_8=0$ whenever $n_3=1$ or $n_7=1$, again because of vertices with color 3. 

Therefore, one of the conditions (i), (ii) or (iii) is satisfied.

Suppose now that $n_1=0$.
We already know that $n_3\le 1$, and that $n_4=0$ whenever $n_3=1$.
If $n_2\ge 1$, the coloring 214 can be used (the other possible coloring, namely 234,
cannot give a better result).
Now, the possible colorings of a path of length 7 are 21312134 and 21431214
(using the color 3 or 4 on the neighbour of $u$ cannot give a better coloring than these two colorings).
This implies $n_7\le 2$ (because of vertices with color 3 or 4) and
both these colorings must be used when $n_7=2$. Moreover, if $n_3=1$ then the coloring 21312134
cannot be used and thus $n_7\le 1$ in that case. On the other hand, the coloring 21431214
cannot be used whenever $n_2\ge 1$.
If $n_4\ge 1$, the coloring $21314$ can be used (the two other possible colorings,
namely 21324 and 23124 cannot give a better result) which implies that the coloring 
21312134 cannot be used and thus $n_7\le 1$.

%P8
Similarly, the possible colorings of a path of length 8 
are 213121314 and 214131214 (again, using the color 3 or 4 on the neighbour of $u$ cannot give a 
better coloring than these two colorings). 
If $n_2\ge 1$ or $n_7=2$, the coloring 214131214 cannot be used,
because of vertices with color 4.
On the other hand, the coloring 213121314 cannot be used whenever $n_3=1$, or $n_2\ge 1$ and $n_7=1$,
or $n_2=0$ and $n_7=2$, because of vertices with color 3.

We can summarize the case $n_1=0$ as follows.
We necessarily have $n_7\le 2$, $n_3\le 1$, and $n_4=0$ whenever $n_3=1$.
If $n_2\ge 1$, we must have $n_7\le 1$, and $n_3=n_4=n_8=0$ whenever $n_7=1$.
Moreover, if $n_7=0$, we must have $n_8=0$ whenever $n_3=1$.
Therefore, one of the conditions (vii), (viii) or (ix) must hold.
Suppose now that $n_2=0$. 
If $n_3=1$, we necessarily have $n_7+n_8\le 1$ and condition (vi) holds.
If $n_3=0$ and $n_7=2$, we necessarily have $n_4=n_8=0$ and condition (v) holds.
If $n_3=0$ and $n_7\le 1$ then condition (iv) holds.
Therefore, one of the conditions (iv) to (ix) must hold.

We finally prove that for every generalized theta graph satisfying any of these conditions,
there exists a packing 4-coloring $\pi$ with $\pi(u)=2$ and $\pi(v)=4$.
We first color all the paths $P_i$, $1\le i\le p$, of length $\ell_i\notin\{3,7,8\}$, if any, as follows:
\begin{itemize}
\item $\ell_i=2$: 214,
\item $\ell_i=4$: 21314,
\item $\ell_i=5$: 213214,
\item $\ell_i=6$: 2131214,
\item $\ell_i\ge 9$: for these paths, we use the following patterns, depending on the value of $(\ell_i\mod 4)$:
    \begin{itemize}
    \item $2(1312)^+14131214$ if $\ell_i\equiv 0\pmod 4$, 
    \item $2(1312)^+13214$ if $\ell_i\equiv 1\pmod 4$,
    \item $2(1312)^+14$ if $\ell_i\equiv 2\pmod 4$,
    \item $2(1312)^+1431214$ if $\ell_i\equiv 3\pmod 4$.
    \end{itemize}
\end{itemize}
It remains to color the paths of length 3, 7 or 8. This can be done according to the condition
of the lemma that is satisfied:
\begin{enumerate}[label={\rm (\roman*)}]
\item %$n_1=1$ and $n_3=n_7=0$, or 
All the paths of length $\ell\neq 8$ are already colored.
The paths of length 8, if any, can be colored 213121314.
\item %$n_1=1$ $n_3=0$, $n_7=1$ and $n_8=0$, or 
The path of length 7 is colored 21312134 (recall that we have no path of length 3 or 4 in that case).
\item %$n_1=n_3=1$ and $n_4=n_7=n_8=0$, or 
The path of length 3 is colored 2134 (recall that we have no path of length 4 in that case).
\item %$n_1=n_2=n_3=0$ and $n_7\le 1$, or
The path of length 7, if any, is colored 21431214 
and all the paths of length 8 are colored 213121314 (recall that we have no path of length 1, 2 or 3 in that case).
\item %$n_1=n_2=n_3=n_4=0$, $n_7=2$ and $n_8=0$, or
The two paths of length 7 are colored 21312134 and 21431214 (recall that we have no path of length less than 5 in that case).
\item %$n_1=n_2=0$, $n_3=1$, $n_4=n_7=0$ and $n_8\le 1$, or
The path of length 3 is colored 2134, the path of length 7, if any, is colored 21431214
 and the path of length 8, if any, is colored 214131214
(recall that we have no path of length 1, 2 or 4 and at most one path of length either 7 or 8 in that case).
\item %$n_1=0$, $n_2\ge 1$ and $n_3=n_7=0$, or
All the paths of length $\ell\neq 8$ are already colored.
The paths of length 8, if any, can be colored 213121314.
\item %$n_1=0$, $n_2\ge 1$, $n_3=n_4=0$, $n_7=1$ and $n_8=0$, or
The path of length 7 is colored 21312134
(recall that we have no path of length 1, 3, 4 or 8 in that case).
\item %$n_1=0$, $n_2\ge 1$, $n_3=1$ and $n_4=n_7=n_8=0$.
The path of length 3 is colored 2134
(recall that we have no path of length 1, 4, 7 or 8 in that case).
\end{enumerate}
In all cases, the so-obtained 4-coloring is clearly a packing 4-coloring of $\Theta$.
\end{proof}

%%%%%%%%%%%%%%%%%%%%%%%%%%%%%%%% 2 -- 3
\begin{lemma}
Let $\Theta=\Theta_{\ell_1,\dots,\ell_p}$, $p\ge 3$, %$1\le \ell_1\le\dots\le \ell_p$,
be a generalized theta graph.
There exists a packing $4$-coloring $\pi$ of $\Theta$ with $\pi(u)=2$ and $\pi(v)=3$
if and only if one of the following conditions holds:
\begin{enumerate}[label={\rm (\roman*)}]
\item $n_1\le 1$ and $\sum_{i\ge 3}n_i\le 1$, or 
\item $n_1=0$ and $n_3+n_4+n_5\le 1$.
\end{enumerate}
\label{lem:und-theta-23}
\end{lemma}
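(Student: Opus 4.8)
The plan is to follow the same two-step scheme as in the previous lemmas: first determine, by purely local distance arguments, which colorings the endpoint constraints $\pi(u)=2$ and $\pi(v)=3$ allow on a single path, and then glue this information together at the two shared end-vertices. The governing facts are that color~$2$ is forbidden on every vertex at distance at most~$2$ from $u$, that color~$3$ is forbidden on every vertex at distance at most~$3$ from $v$, and that color~$4$ can occur at most once inside the ball $B_u=\{w:\ d(u,w)\le 2\}$ (any two vertices of $B_u$ being at distance at most~$4$). In particular, on any path $P_i$ with $\ell_i\ge 3$ the two vertices $x_i^1$ and $x_i^2$ both lie in $B_u$, cannot receive color~$2$, and cannot both receive color~$1$ (they are adjacent); hence at least one of them receives color~$3$ or~$4$.

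For the necessity part I would split according to the value of $n_1$ (recalling $n_1\le 1$ since $\Theta$ is simple). If $n_1=1$, the edge $uv$ gives $d(u,v)=1$, so $d(x_i^1,v)\le 2$ and $d(x_i^2,v)\le 3$; thus color~$3$ is also forbidden on $x_i^1$ and $x_i^2$, whence $x_i^1,x_i^2\in\{1,4\}$ for every path of length at least~$3$, forcing a vertex of color~$4$ inside $B_u$. Two such vertices would be at distance at most~$4$, so at most one path of length $\ge 3$ can occur, which is exactly condition~(i). If $n_1=0$, I would enumerate the admissible colorings of the short paths to isolate two \emph{scarce resources} near $u$: the unique color-$4$ vertex of $B_u$, and a color-$3$ vertex at distance exactly~$1$ from $u$. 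The point is that for $\ell_i\le 5$ one has $d(x_i^2,v)\le 3$, so $x_i^2$ cannot be colored~$3$ and the forced special of a path of length $3$, $4$ or $5$ must be one of these two resources; whereas for $\ell_i\ge 6$ color~$3$ may be placed on $x_i^2$, and several such vertices, being pairwise at distance~$4$, cause no conflict. Two color-$4$ vertices of $B_u$ clash, and two color-$3$ neighbors of $u$ are at distance~$2<4$ and also clash, so two short paths would have to occupy the two \emph{distinct} resources. Here the hypothesis $p\ge 3$ is decisive: a third path then has its two $B_u$-vertices unable to receive color~$4$ (the slot is used), color~$3$ (a color-$3$ vertex already lies within distance~$3$ of them), or color~$2$, and they cannot both be color~$1$ — a contradiction. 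This yields $n_3+n_4+n_5\le 1$, i.e. condition~(ii).

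For the sufficiency part I would exhibit explicit patterns. Set $\pi(u)=2$, $\pi(v)=3$, color every path of length~$2$ with $213$, and, when $n_1=1$, simply keep the edge $uv$. Under~(i) the single path of length $\ge 3$ is colored by an admissible pattern placing its forced color~$4$ near $u$ (for instance $2143$ in length~$3$, with the analogous periodic patterns for larger lengths). Under~(ii) the single path of length in $\{3,4,5\}$ is colored using its color-$4$ version rather than any color-$3$-at-$x^1$ version (e.g. $2143$, $24123$, $241213$), so that the neighbourhood of $u$ carries no color~$3$; every path of length $\ge 6$ is then colored by a fixed $4$-free periodic pattern, such as $2131213$ for length~$6$, keeping color~$3$ at distance more than~$3$ from $v$ (in particular placing it at distance~$2$ from $u$) and color~$2$ at distance more than~$2$ from $u$ (placing it at distance~$2$ from $v$). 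The color-$3$ vertices of distinct long paths then sit pairwise at distance~$4$ near $u$, the symmetric statement holds for color~$2$ near $v$, and none clashes with the unique short path; a routine check of the finitely many boundary configurations confirms that the resulting map is a packing $4$-coloring.

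The main obstacle is the necessity direction when $n_1=0$: correctly identifying the pair of scarce resources near $u$, and proving that they are genuinely scarce, hinges on the distinction between paths of length at most~$5$ (where color~$3$ cannot reach $x_i^2$) and longer paths (where it can), and it is precisely $p\ge 3$ that rules out the otherwise admissible configuration using one resource of each type. The sufficiency is comparatively routine, provided one remembers to color the unique short path with its color-$4$ pattern so as not to collide with the color-$3$ placements of the long paths.
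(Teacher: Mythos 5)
Your necessity argument is sound and is essentially the paper's argument reorganized: the paper enumerates the admissible colorings of paths of length $3$, $4$ and $5$ explicitly (e.g.\ $2143$, $2413$, $21413$, $214123$, $241213$, and $231213$ only when $n_2=0$) and observes that each consumes either the unique color-$4$ slot near $u$ or the unique color-$3$ neighbor of $u$, then uses $p\ge 3$ to kill the mixed configuration exactly as you do. Two small expository points there: you should note explicitly that the color-$3$-neighbor-of-$u$ resource is available only to a path of length $5$ and only when $n_2=0$ (so that the ``third path'' automatically has length at least $3$ and your ``two $B_u$-vertices'' argument applies), and that two color-$3$ neighbors of $u$ clash because their distance $2$ is at most $3$, not because it is less than $4$.

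The genuine gap is in your sufficiency construction. You claim that under condition (ii) every path of length at least $6$ can be colored by ``a fixed $4$-free periodic pattern, such as $2131213$ for length~$6$.'' Such patterns do not exist in general: for a path of length $7$ with ends colored $2$ and $3$, color $1$ can be used on at most three of the six internal vertices, color $2$ on at most two (and only at $x^3$ and $x^6$), and color $3$ on at most one (and only among $x^1,x^2,x^3$); the unique way to reach six vertices forces color $2$ on $\{x^3,x^6\}$ and color $3$ on $x^1$ or $x^2$, leaving the adjacent pair $x^4,x^5$ both colored $1$ --- impossible. So every path of length $7$ (and similarly $8$, $9$, \dots; in fact every length $\not\equiv 2\pmod 4$) must use color $4$ internally. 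Your construction therefore cannot avoid color $4$ on the long paths, and the whole point of the sufficiency proof --- which your sketch dismisses as ``a routine check'' --- is to place these unavoidable color-$4$ vertices so that they are pairwise at distance greater than $4$ and at distance greater than $4$ from the color-$4$ vertex of the single short path. The paper does this with patterns such as $2(1312)^*1341213$ and $2(1312)^*13141213$, which put color $4$ at distance at least $3$ from $u$ and exactly $4$ from $v$, while the short path's color $4$ sits at distance $2$ from $u$; you need to supply such patterns and verify these distance computations for the proof to go through.
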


\begin{proof}
Suppose first that $\pi$ is a packing 4-coloring of $\Theta$ with $\pi(u)=2$ and $\pi(v)=3$.
%Since every path of length $2$ can be colored 213, we can have any number of such paths.

If $n_1=1$ then all the neighbors of $u$ and $v$ must be colored 1 or 4.
In every path of length $\ell\ge 3$, the vertex at distance 2 from $u$ is then necessarily colored 4
if the neighbor of $u$ is colored 1, or 1 if the neighbor of $u$ is colored 4.
Hence, we can have at most one such path 
(otherwise, we would have two vertices with color~4 at distance at most~4
from each other), that is $\sum_{i\ge 3}n_i\le 1$ and condition (i) is satisfied.

Suppose now that $n_1=0$.
The possible colorings of a path of length~3 are 2143 or 2413,
and the possible colorings of a path of length~4 are 21413, 21423 or 24123.
Moreover, the possible colorings of a path of length~5 are 214123 or 241213,
or 231213 (note that 231413 cannot give a better result) whenever $n_2=0$.
Therefore, if $n_2=1$ we necessarily have $n_3+n_4+n_5\le 1$, because of color~4.
On the other hand, if $n_2=0$ we can have $n_5=1$ which still implies $n_3+n_4\le 1$.
But if $n_3+n_4=1$ then the coloring of the path of length~5 must use color~3
on the neighbor of $u$, so that only colors 1 or 4 can be used on vertices at distance
at most~2 from $u$ and thus only one such path can exists (which is the assumed existing
path of length 3 or 4), which contradicts the assumption $p\ge 3$.
Therefore, condition (ii) is satisfied.

We finally prove that for every generalized theta graph satisfying any of these conditions,
there exists a packing 4-coloring $\pi$ with $\pi(u)=2$ and $\pi(v)=3$.
Every path of length~2 can be colored 213. 
If there is a path of length 3 (which implies either $n_1=1$ and $\sum_{i\ge 4}n_i=0$,
or $n_1=n_4=n_5=0$), then we color this path with 2143.
If there is a path of length 4 (which implies either $n_1=1$, $n_3=0$ and $\sum_{i\ge 5}n_i=0$,
or $n_1=n_3=n_5=0$), then we color this path with 21413.
If there is a path of length 5 (which implies either $n_1=1$, $n_3=n_4=0$ and $\sum_{i\ge 6}n_i=0$,
or $n_1=n_3=n_4=0$), 
then we color this path with 214213.

It remains to prove that every path $P_i$ of length $\ell_i\ge 6$ can be colored. 
If $n_1=1$ then $n_3=n_4=n_5=0$ and we have only one such path.
We then color this path using one of the following patterns, depending on the value of $(\ell_i\mod 4)$:
\begin{itemize}
\item $214(1312)^+13$ if $\ell_i\equiv 0\pmod 4$,  
\item $2142(1312)^+13$ if $\ell_i\equiv 1\pmod 4$, 
\item $21412(1312)^*13$ if $\ell_i\equiv 2\pmod 4$, 
\item $214312(1312)^*13$ if $\ell_i\equiv 3\pmod 4$.
\end{itemize}
If $n_1=0$, we color any such path
using the following patterns, depending on the value of $(\ell_i\mod 4)$:
\begin{itemize}
\item $2(1312)^*13141213$ if $\ell_i\equiv 0\pmod 4$, 
\item $2(1312)^+41213$ if $\ell_i\equiv 1\pmod 4$,
\item $2(1312)^+13$ if $\ell_i\equiv 2\pmod 4$,
\item $2(1312)^*1341213$ if $\ell_i\equiv 3\pmod 4$.
\end{itemize}
The so-obtained 4-coloring is clearly a packing 4-coloring of $\Theta$.
\end{proof}

%%%%%%%%%%%%%%%%%%%%%%%%%%%%%%%%%%%

Using Lemmas~\ref{lem:und-theta-44} to~\ref{lem:und-theta-23}
we get a complete characterization of generalized theta graphs admitting
a packing 4-coloring that does not use color 1 on vertex $u$ nor on vertex $v$:

\begin{theorem}
Let $\Theta=\Theta_{\ell_1,\dots,\ell_p}$, $p\ge 3$, %$1\le \ell_1\le\dots\le \ell_p$,
be a generalized theta graph.
There exists a packing $4$-coloring $\pi$ of $\Theta$ with $\pi(u)\neq 1$ and $\pi(v)\neq 1$
if and only if one of the following conditions holds:
\begin{enumerate}[label={\rm (\Alph*)}]
\item $n_1=n_2=n_3=n_4=0$, \label{xx_44} % (by Lemma~\ref{lem:und-theta-44}), 
\item $n_1=n_2=n_3=0$, $n_5\le 2$ and $n_5+n_6\le 4$, \label{xx_33} % (by Lemma~\ref{lem:und-theta-33}),
\item $n_1=n_2=n_3=0$ and $n_7\le 1$, \label{xx_24iv} % (by Lemma~\ref{lem:und-theta-24}),
\item $n_1=n_2=n_3=0$ and $n_5+n_6+n_7\le 2$, \label{xx_22ii} % (by Lemma~\ref{lem:und-theta-22}),
\item $n_1=n_2=0$, $n_3\le 1$ and $n_5=n_6=n_7=0$, \label{xx_22i} % (by Lemma~\ref{lem:und-theta-22}),
\item $n_1=n_2=0$, $n_3\le 1$, $n_4=0$  and $n_7+n_8\le 1$, \label{xx_24vi} % (by Lemma~\ref{lem:und-theta-24}),
\item $n_1=n_3=n_4=0$, $n_7\le 1$ and $n_8=0$, \label{xx_24viii} %  (by Lemma~\ref{lem:und-theta-24}),
\item $n_1=0$, $n_3\le 2$ and ($n_5=0$ or $n_5+n_6\le 1$), \label{xx_34ii} %  (by Lemma~\ref{lem:und-theta-34}),
\item $n_1=0$ and $n_3+n_4+n_5\le 1$, \label{xx_23ii} %   (by Lemma~\ref{lem:und-theta-23}),
\item $n_1\le 1$, $n_3\le 2$ and $n_5=n_6=0$, \label{xx_34i} %  (by Lemma~\ref{lem:und-theta-34}),
\item $n_1\le 1$, and $n_3=n_7=0$, \label{xx_24i} %   (by Lemma~\ref{lem:und-theta-24}),
\item $n_1\le 1$, $n_3=n_4=0$, $n_7\le 1$ and $n_8=0$, \label{xx_24ii} %  (by Lemma~\ref{lem:und-theta-24}),
\item $n_1\le 1$, $n_3\le 1$ and $n_4=n_7=n_8=0$, \label{xx_24iii} %   (by Lemma~\ref{lem:und-theta-24}),
\item $n_1\le 1$, and $\sum_{i\ge 3}n_i\le 1$. \label{xx_23i} %   (by Lemma~\ref{lem:und-theta-23}). 
\end{enumerate}
\label{th:und-theta-not1}
\end{theorem}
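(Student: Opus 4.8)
The plan is to assemble the statement directly from the six preceding lemmas. Since $\pi(u)\neq 1$ and $\pi(v)\neq 1$, both end-vertices receive a color in $\{2,3,4\}$, so a packing $4$-coloring of the required kind exists if and only if one exists for at least one of the nine ordered color pairs $(\pi(u),\pi(v))\in\{2,3,4\}^2$. The map reversing every path $P_i$ and exchanging $u$ and $v$ is an automorphism of $\Theta$, so the existence of such a coloring depends only on the \emph{unordered} pair $\{\pi(u),\pi(v)\}$; there are six of these, namely $\{4,4\}$, $\{3,3\}$, $\{2,2\}$, $\{3,4\}$, $\{2,4\}$ and $\{2,3\}$, treated respectively by Lemmas~\ref{lem:und-theta-44}, \ref{lem:und-theta-33}, \ref{lem:und-theta-22}, \ref{lem:und-theta-34}, \ref{lem:und-theta-24} and~\ref{lem:und-theta-23}. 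Hence the desired coloring exists if and only if the disjunction of the conditions supplied by these six lemmas holds, and it remains only to check that this disjunction simplifies to~\ref{xx_44}--\ref{xx_23i}.

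First I would record the immediate correspondences. Lemma~\ref{lem:und-theta-44} gives~\ref{xx_44}; Lemma~\ref{lem:und-theta-33} gives~\ref{xx_33}; Lemma~\ref{lem:und-theta-34} gives~\ref{xx_34i} and~\ref{xx_34ii}; Lemma~\ref{lem:und-theta-23} gives~\ref{xx_23i} and~\ref{xx_23ii}; and conditions~(i)--(iv), (vi) and~(viii) of Lemma~\ref{lem:und-theta-24} give, verbatim, \ref{xx_24i}, \ref{xx_24ii}, \ref{xx_24iii}, \ref{xx_24iv}, \ref{xx_24vi} and~\ref{xx_24viii}. For Lemma~\ref{lem:und-theta-22}, condition~(ii) is exactly~\ref{xx_22ii}, while~\ref{xx_22i} is its condition~(i) with the constraint $n_3=1$ loosened to $n_3\le 1$; the extra case $n_3=0$ is harmless since it already lies in~\ref{xx_22ii}, so \ref{xx_22i}$\,\vee\,$\ref{xx_22ii} is equivalent to the disjunction of the two conditions of Lemma~\ref{lem:und-theta-22}.

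Next I would verify that the three remaining conditions of Lemma~\ref{lem:und-theta-24}, namely~(v), (vii) and~(ix), are redundant, each being implied by a clause already on the list. Indeed, condition~(v) ($n_1=n_2=n_3=n_4=0$, $n_7=2$, $n_8=0$) forces $n_1=n_2=n_3=n_4=0$, hence~\ref{xx_44}, which imposes no constraint on the longer paths; condition~(vii) ($n_1=n_3=n_7=0$) gives $n_1\le 1$ together with $n_3=n_7=0$, hence~\ref{xx_24i}; and condition~(ix) ($n_1=0$, $n_3\le 1$, $n_4=n_7=n_8=0$) gives $n_1\le 1$ together with $n_3\le 1$ and $n_4=n_7=n_8=0$, hence~\ref{xx_24iii}. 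Discarding these three clauses turns the disjunction into~\ref{xx_44}$\,\vee\cdots\vee\,$\ref{xx_23i}, as claimed.

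I expect the only real difficulty to be bookkeeping rather than mathematics: one must ensure that no lemma condition has been silently dropped and, for each of the three subsumptions, track precisely which $n_\ell$ are left unconstrained in the absorbing clause---it is exactly the absence of any constraint on $n_5,n_6,n_7$ in~\ref{xx_44}, and the relaxation of $n_1=0$ to $n_1\le 1$ in~\ref{xx_24i} and~\ref{xx_24iii}, that make the subsumptions work. No fresh analysis of colorings is needed, since all of it has already been carried out in Lemmas~\ref{lem:und-theta-44}--\ref{lem:und-theta-23}.
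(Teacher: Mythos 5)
Your proposal is correct and follows essentially the same route as the paper: the theorem is proved by taking the disjunction of the conditions from Lemmas~\ref{lem:und-theta-44} to~\ref{lem:und-theta-23} (one lemma per unordered pair of end-vertex colors) and observing that cases (v), (vii) and (ix) of Lemma~\ref{lem:und-theta-24} are absorbed by items~\ref{xx_44}, \ref{xx_24i} and~\ref{xx_24iii} respectively, exactly as you describe. Your additional remark justifying the harmless relaxation of $n_3=1$ to $n_3\le 1$ in item~\ref{xx_22i} is a point the paper passes over silently, so your write-up is if anything slightly more careful.
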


\begin{proof}
This theorem simply summarizes the results of Lemmas~\ref{lem:und-theta-44} to~\ref{lem:und-theta-23}:
\begin{itemize}
\item Item~\ref{xx_44} follows from Lemma~\ref{lem:und-theta-44} and contains case (v) of Lemma~\ref{lem:und-theta-24}.
\item Item~\ref{xx_33} follows from Lemma~\ref{lem:und-theta-33}.
\item Item~\ref{xx_24iv} follows from case (iv) of Lemma~\ref{lem:und-theta-24}.
\item Item~\ref{xx_22ii} follows from case (ii) of Lemma~\ref{lem:und-theta-22}.
\item Item~\ref{xx_22i} follows from case (i) of Lemma~\ref{lem:und-theta-22}.
\item Item~\ref{xx_24vi} follows from case (vi) of Lemma~\ref{lem:und-theta-24}.
\item Item~\ref{xx_24viii} follows from case (viii) of Lemma~\ref{lem:und-theta-24}.
\item Item~\ref{xx_34ii} follows from case (ii) of Lemma~\ref{lem:und-theta-34}.
\item Item~\ref{xx_23ii} follows from case (ii) of Lemma~\ref{lem:und-theta-23}.
\item Item~\ref{xx_34i} follows from case (i) of Lemma~\ref{lem:und-theta-34}.
\item Item~\ref{xx_24i} follows from case (i) of Lemma~\ref{lem:und-theta-24}
and contains (vii) of Lemma~\ref{lem:und-theta-24}.
\item Item~\ref{xx_24ii} follows from case (ii) of Lemma~\ref{lem:und-theta-24}.
\item Item~\ref{xx_24iii} follows from case (iii) Lemma~\ref{lem:und-theta-24}
and contains case (ix) of Lemma~\ref{lem:und-theta-24}.
\item Item~\ref{xx_23i} follows from case (i) of Lemma~\ref{lem:und-theta-23}.
\end{itemize}
Hence, all the cases have been considered, this concludes the proof.
\end{proof}

If a generalized theta graph $\Theta$ satisfies none of the conditions \ref{xx_44} to~\ref{xx_23i}
of Theorem~\ref{th:und-theta-not1},
then every packing 4-coloring of $\Theta$ must use color 1 on $u$ or $v$.
The following observation will be useful:

\begin{observation}
If a generalized theta graph $\Theta=\Theta_{\ell_1,\dots,\ell_p}$, 
$p\ge 3$, admits a packing $4$-coloring $\pi$ with $\pi(u)=1$ then we necessarily have $p=3$.
\label{obs:1-234}
\end{observation}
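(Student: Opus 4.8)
The plan is to count the neighbours of $u$ together with the colours they can legally receive. In $\Theta=\Theta_{\ell_1,\dots,\ell_p}$ the vertex $u$ has exactly $p$ neighbours, one contributed by each path: this is the internal vertex $x_i^1$ when $\ell_i\ge 2$, and it is $v$ itself when $\ell_1=1$. These are $p$ distinct vertices because the paths $P_1,\dots,P_p$ are internally vertex-disjoint and $\Theta$ is simple (so at most one edge joins $u$ to $v$). The first thing I would record is that any two distinct neighbours $a,b$ of $u$ satisfy $d_\Theta(a,b)\le 2$, since the walk $a\,u\,b$ has length~$2$.

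Next I would bound, for each colour $c\in\{1,2,3,4\}$, how many neighbours of $u$ may carry colour $c$ under the packing $4$-coloring $\pi$. Because $\pi(u)=1$ and every neighbour of $u$ lies at distance~$1$ from $u$, and colour~$1$ requires distance strictly greater than~$1$, no neighbour of $u$ may receive colour~$1$. For $c\in\{2,3,4\}$, two vertices sharing colour~$c$ must be at distance strictly greater than $c\ge 2$; but any two neighbours of $u$ are at distance at most~$2$, so colour $c$ can appear on at most one neighbour of $u$. Consequently the $p$ neighbours of $u$ draw their colours from $\{2,3,4\}$ with each of these three colours used at most once, which yields $p\le 3$. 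Combined with the standing hypothesis $p\ge 3$, this forces $p=3$, as required.

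The argument is short and has no genuine obstacle once the neighbourhood of $u$ is correctly identified; the only point deserving a moment's care is the uniform treatment of the case $\ell_1=1$, where the neighbour along $P_1$ is $v$ rather than an internal vertex. This does not affect the count, since $u$ still has exactly $p$ neighbours and $v$, lying at distance~$1$ from $u$, is subject to the very same colour restrictions as the vertices $x_i^1$.
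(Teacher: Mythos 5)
Your proof is correct and follows exactly the same argument as the paper, which notes tersely that no two neighbours of $u$ can receive the same colour and that none of them can be coloured~$1$; your version merely spells out the distance computations and the case $\ell_1=1$ explicitly.
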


To see that, it suffices to note that no two neighbors of $u$ can be assigned the same
color and that none of them can be colored 1. 

The next lemma will show that no generalized theta graph satisfying 
none of the conditions \ref{xx_44} to \ref{xx_23i} admits a packing 4-coloring.
By Observation~\ref{obs:1-234}, it suffices to consider generalized theta graphs
of the form $\Theta_{\ell_1,\ell_2,\ell_3}$. Moreover, by symmetry, it suffices to
consider packing 4-colorings that assign the color 1 to $u$.

%%%%%%%%%%%%%%%%%%%%%%%%%%%%%%%%%%%%%%%%% NO 1-1

\begin{lemma}
If $\Theta=\Theta_{\ell_1,\ell_2,\ell_3}$
is a generalized theta graph and $\pi$ a 
packing $4$-coloring of $\Theta$ with $\pi(u)=1$,
then $\Theta$ satisfies at least one of the conditions \ref{xx_44} to \ref{xx_23i}.
\label{lem:no-11}
\end{lemma}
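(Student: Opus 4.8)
The plan is to extract, from the hypothesised packing $4$-coloring $\pi$ with $\pi(u)=1$, enough numerical constraints on the $n_\ell$ to place $\Theta$ into one of the fourteen families \ref{xx_44}--\ref{xx_23i}. First I would record the local structure at $u$. The three neighbours of $u$ cannot be coloured $1$, and they are pairwise at distance at most $2$; since no two distinct vertices at distance at most $2$ can share a colour (at distance $2$ only colour $1$ could be repeated, and at distance $1$ none can), they receive three pairwise distinct colours, necessarily $\{2,3,4\}$. When $\ell_1=1$ the vertex $v$ is itself one of these neighbours, so this already forces $\pi(v)\in\{2,3,4\}$; and in every case simplicity gives $n_1\le 1$.

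Second, I would make the cheap reduction that disposes of most graphs at once. If $\ell_2\le 2$, then at most the single path $P_3$ has length $\ge 3$, so $\sum_{i\ge 3}n_i\le 1$, and together with $n_1\le 1$ this is exactly condition \ref{xx_23i}. Hence I may henceforth assume $\ell_2,\ell_3\ge 3$, i.e. the two longest paths both have length at least $3$.

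Third, under this assumption I would run a case analysis on $\pi(v)$ — either $\pi(v)=1$, in which case $\ell_1\ge 2$ and the argument of the first step applied symmetrically at $v$ colours the three neighbours of $v$ with $\{2,3,4\}$ as well, or $\pi(v)\in\{2,3,4\}$ — together with the permutation of $\{2,3,4\}$ realised on the neighbourhoods of $u$ and of $v$. In each branch the packing constraints propagate one or two steps inward from $u$ and from $v$: the scarcity of colour $4$ (repeats need distance $>4$), of colour $3$ (distance $>3$) and of colour $2$ (distance $>2$), combined with the fact that colour $1$ is already spent on $u$, limits how many of the short paths (those of length at most $8$, the only lengths the conditions constrain, since every path of length $\ge 9$ is colourable by the generic patterns used in Lemmas~\ref{lem:und-theta-44}--\ref{lem:und-theta-23}) can occur, and forces small upper bounds on $n_3,n_4,n_5,n_6,n_7,n_8$. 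Reading off these bounds in each branch then exhibits one of the conditions \ref{xx_44}--\ref{xx_23i}; any branch whose propagated constraints turn out to be contradictory simply cannot arise, since $\pi$ was assumed to exist, so there is nothing to prove there.

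The main obstacle is not any single deduction but the completeness and bookkeeping of this analysis: one must check that every admissible combination of $\pi(v)$, of the colour pattern on the neighbourhoods of $u$ and $v$, and of the residues $\ell_i\bmod 4$ of the short paths, really does land in at least one of the fourteen families, and conversely that no economical use of colour $1$ at $u$ produces a configuration escaping all of them. Concretely, the delicate branches are those containing several paths of lengths among $3,5,6,7$, where colours $3$ and $4$ are forced near both $u$ and $v$ at once; these are precisely the cases that separate conditions \ref{xx_33}, \ref{xx_24iv}, \ref{xx_22ii} and \ref{xx_34ii}, and they demand the most careful distance computations routed through the vertex $v$.
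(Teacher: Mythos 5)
Your setup is sound and matches the paper's: the three neighbours of $u$ must receive the three distinct colours $2,3,4$ (this is the content of Observation~\ref{obs:1-234}), simplicity gives $n_1\le 1$, and the reduction ``if $\ell_2\le 2$ then $\sum_{i\ge 3}n_i\le 1$, which together with $n_1\le 1$ is condition~\ref{xx_23i}'' is correct and disposes of many instances at once. But from that point on what you have written is a plan rather than a proof. The entire content of the lemma is the claim that every configuration compatible with a packing $4$-coloring having $\pi(u)=1$ falls into one of the fourteen families, and you never carry out the case analysis that establishes this. Saying that ``the packing constraints propagate one or two steps inward and force small upper bounds on $n_3,\dots,n_8$'' and that ``reading off these bounds exhibits one of the conditions'' is the statement to be proved, restated; the delicate branches you flag in your last paragraph are exactly the ones that must be resolved, and you explicitly leave them unresolved.

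Concretely, the paper's proof splits on $n_1\in\{0,1\}$ and then on the value of $n_3$ (using Lemma~\ref{lem:und-theta-only-3} to exclude $n_3=3$, since that would force $\pcn(\Theta)=5$), and in each subcase derives a specific inequality and names the condition that is met --- for instance, $n_1=0$ and $n_3=2$ force $n_5+n_6\le 1$, hence condition~\ref{xx_34ii}; $n_1=1$, $n_3=1$ and $n_5+n_6=1$ force $n_4=n_7=n_8=0$, hence condition~\ref{xx_24iii}; $n_1=1$, $n_3=0$ and $n_7=2$ force $n_5=n_6=0$, hence condition~\ref{xx_34i}. None of these deductions, nor the exhaustive enumeration guaranteeing that no configuration escapes all fourteen conditions, appears in your write-up. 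Until you exhibit, branch by branch, which of \ref{xx_44}--\ref{xx_23i} holds and why the stated bounds on the $n_\ell$ follow from the hypothesised colouring, the lemma is not proved.
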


\begin{proof}
We consider two cases, according to the value of $n_1$.
\begin{enumerate}
\item $n_1=0$.\\
If $n_2=n_3=n_4=0$ then $\Theta$ satisfies condition~\ref{xx_44}. % (A).

Observe that we cannot have $n_3=3$ since, by Lemma~\ref{lem:und-theta-only-3},
we would have $\pcn(\Theta)=5$, a contradiction.

If $n_3=2$ then we necessarily have $n_5+n_6\le 1$ and therefore $\Theta$ satisfies condition~\ref{xx_34ii}. % (H).

If $n_3=1$ and $n_5=0$ then $\Theta$ satisfies condition~\ref{xx_34ii}. % (H).
If $n_3=1$, $n_5=1$ and $n_6=0$ then, again, $\Theta$ satisfies condition~\ref{xx_34ii}. % (H).
If $n_3=1$, $n_5=1$ and $n_6=1$ then, since we necessarily have $n_2=n_4=n_7=n_8=0$, $\Theta$ satisfies condition~\ref{xx_24vi}. % (F).
If $n_3=1$ and $n_5=2$ then $\Theta$ also satisfies condition~\ref{xx_24vi}. % (F).

Suppose that $n_3=0$.
If $n_4\ge 1$ and $n_2=0$ then we necessarily have $n_5\le 2$ and $n_5+n_6\le 4$
and  $\Theta$ satisfies condition~\ref{xx_33}. % (B)
If $n_4\ge 1$ and $n_2\ge 1$ then we necessarily have $n_5+n_6\le 1$
and  $\Theta$ satisfies condition~\ref{xx_34ii}. % (H)
If $n_4=0$ and $n_5\le 1$ then $\Theta$ satisfies condition~\ref{xx_23ii}. % (I).
If $n_4=0$ and $n_5\ge 2$ then $\Theta$ satisfies condition~\ref{xx_24vi}  if $n_7=0$ or % (F)
condition~\ref{xx_24viii}  if $n_7=1$ (since we then have $n_7+n_8\le 1$). % (G)

\item $n_1=1$.\\
In that case, we necessarily have $n_3\le 2$.
If $n_3=2$ then we necessarily have $n_5=n_6=0$ and $\Theta$ satisfies condition~\ref{xx_34i}. % (J).

If $n_3=1$ and $n_5=n_6=0$ then $\Theta$ satisfies condition~\ref{xx_34i}. % (J).
If $n_3=1$ and $n_5+n_6=1$ then we necessarily have $n_4=n_7=n_8=0$ and $\Theta$ satisfies condition~\ref{xx_24iii}. % (M).

Suppose that $n_3=0$.
If $n_7=0$ then $\Theta$ satisfies condition~\ref{xx_24i}. % (K).
If $n_7=1$ and $n_4=n_8=0$
then $\Theta$ satisfies condition~\ref{xx_24ii}. % (L).
If $n_7=n_4+n_8=1$ or $n_7=2$ then we necessarily have $n_5=n_6=0$ 
and  $\Theta$ satisfies condition~\ref{xx_34i}. % (J).

\end{enumerate}
This completes the proof.
\end{proof}

We are now able to characterize  generalized theta graphs with packing
chromatic number~4:

\begin{theorem}
Let $\Theta=\Theta_{\ell_1,\dots,\ell_p}$, $p\ge 2$,
be a generalized theta graph.
We then have $\pcn(\Theta)=4$ if and only if either
\begin{enumerate}[label={\rm (\arabic*)}] 
\item $p=2$, $\ell_1+\ell_2\neq 3$ and $\ell_1+\ell_2\not\equiv 0\pmod 4$,
or 
\item $p\ge 3$, $n_2\neq p$, there exist $i_1,i_2$, $1\le i_1<i_2\le p$, 
such that $\ell_{i_1}+\ell_{i_2}\not\equiv 0\pmod 4$,
and $\Theta$ satisfies one of the conditions \ref{xx_44} to~\ref{xx_23i}.
\end{enumerate}
\label{th:und-pcn-4}
\end{theorem}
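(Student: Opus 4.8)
The plan is to reduce the statement to two independent questions, ``does $\Theta$ admit a packing $4$-coloring?'' and ``does $\Theta$ admit a packing $3$-coloring?'', and then intersect the answers. Since every generalized theta graph contains a cycle, Proposition~\ref{prop:subgraph} and Theorem~\ref{th:goddard}(2) give $\pcn(\Theta)\ge 3$ unconditionally, so $\pcn(\Theta)=4$ holds if and only if $\Theta$ is packing $4$-colorable but not packing $3$-colorable. The case $p=2$ I would dispose of immediately: there $\Theta_{\ell_1,\ell_2}$ is precisely the cycle $C_{\ell_1+\ell_2}$, so Theorem~\ref{th:goddard}(2) yields $\pcn(\Theta)=4$ exactly when $\ell_1+\ell_2\ge 3$ is neither equal to $3$ nor a multiple of $4$, which is condition~(1). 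From here on I would assume $p\ge 3$.

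First I would pin down when $\Theta$ admits a packing $4$-coloring. Theorem~\ref{th:und-theta-not1} already characterizes this for colorings that avoid color~$1$ on both end-vertices, producing exactly the list of conditions \ref{xx_44} to~\ref{xx_23i}. The remaining point is that permitting color~$1$ on $u$ or $v$ adds no new colorable graphs. This is where Observation~\ref{obs:1-234} and Lemma~\ref{lem:no-11} are used: if $\Theta$ admits no packing $4$-coloring with both end-vertices differently colored from~$1$ but admits some packing $4$-coloring $\pi$, then $\pi$ must place color~$1$ on $u$ or $v$, say $\pi(u)=1$ by symmetry; Observation~\ref{obs:1-234} then forces $p=3$, and Lemma~\ref{lem:no-11} shows that $\Theta$ nonetheless satisfies one of \ref{xx_44} to~\ref{xx_23i}. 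Conversely each of these conditions yields a packing $4$-coloring through Theorem~\ref{th:und-theta-not1}. This would establish, for every $p\ge 3$,
\[
\pcn(\Theta)\le 4 \iff \Theta \text{ satisfies one of the conditions } \ref{xx_44}\text{ to }\ref{xx_23i}.
\]

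Next I would translate ``$\Theta$ is not packing $3$-colorable'' into the arithmetic side conditions. As $\pcn(\Theta)\ge 3$ always holds, this is just the negation of Lemma~\ref{lem:und-pcn-3}. Negating case~(ii) of that lemma gives the requirement that some pair of indices $i_1<i_2$ satisfy $\ell_{i_1}+\ell_{i_2}\not\equiv 0\pmod 4$; here I would note that for $p\ge 3$ case~(ii) forces either all lengths $\equiv 0\pmod 4$ or all lengths $\equiv 2\pmod 4$, so a single ``bad pair'' already rules it out. Negating case~(i) excludes the family $\Theta_{1,2,\dots,2}$, which the side condition $n_2\ne p$ (together with the requirement that at least one path have length at least~$3$) is intended to enforce. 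Intersecting these requirements with the colorability condition of the previous paragraph produces condition~(2) and finishes the characterization.

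The hard part will be the bridging step in the second paragraph: \emph{a priori} a theta graph might be $4$-colorable only via colorings that assign color~$1$ to an end-vertex, a situation Theorem~\ref{th:und-theta-not1} does not detect, and the whole equivalence would fail without closing this gap. The combined use of Observation~\ref{obs:1-234}, which collapses that situation to $p=3$, and Lemma~\ref{lem:no-11}, which then recovers one of \ref{xx_44} to~\ref{xx_23i}, is precisely what makes the two notions ``$4$-colorable'' and ``satisfies one of \ref{xx_44}--\ref{xx_23i}'' coincide. A second, more clerical, difficulty is verifying that the stated arithmetic conditions match the negation of Lemma~\ref{lem:und-pcn-3} on the nose; in particular one must check with care that the packing $3$-colorable family $\Theta_{1,2,\dots,2}$ arising from case~(i) is genuinely excluded, since it does satisfy several of the colorability conditions \ref{xx_44} to~\ref{xx_23i} and therefore cannot be ruled out by colorability alone.
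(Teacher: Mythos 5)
Your overall strategy coincides with the paper's: the published proof consists of two sentences, citing Theorem~\ref{th:goddard}(2) for $p=2$ and Theorem~\ref{th:und-theta-not1} together with Lemma~\ref{lem:no-11} for $p\ge 3$, and you correctly make explicit the ingredient it leaves implicit, namely that the arithmetic side conditions in~(2) are there to negate Lemma~\ref{lem:und-pcn-3} and rule out $\pcn(\Theta)=3$. Your treatment of the equivalence ``$\pcn(\Theta)\le 4$ iff one of \ref{xx_44}--\ref{xx_23i} holds,'' via Observation~\ref{obs:1-234} and Lemma~\ref{lem:no-11}, is the right bridge and matches the paper.

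The gap is in the final step, which you flag as ``clerical'' and then do not close. The negation of case~(i) of Lemma~\ref{lem:und-pcn-3} is ``not ($\ell_1=1$ and $\ell_2=\dots=\ell_p=2$),'' and the condition $n_2\neq p$ does \emph{not} enforce it: for $\Theta_{1,2,\dots,2}$ one has $n_2=p-1\neq p$, the pair $\ell_1+\ell_2=3\not\equiv 0\pmod 4$ exists, and condition~\ref{xx_23i} holds ($n_1=1$ and $\sum_{i\ge 3}n_i=0$), so this graph meets every requirement of~(2) even though $\pcn(\Theta_{1,2,\dots,2})=3$ by Lemma~\ref{lem:und-pcn-3}(i). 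Your parenthetical ``together with the requirement that at least one path have length at least~$3$'' appeals to a hypothesis that appears nowhere in the statement, so the intersection in your last paragraph does not in fact produce condition~(2) as written. To make the argument (and the statement) correct you must either replace $n_2\neq p$ by $n_1+n_2\neq p$ --- which is presumably what is intended, since $n_2\neq p$ is otherwise redundant: $n_2=p$ already violates the bad-pair condition because $2+2\equiv 0\pmod 4$ --- or add the explicit exclusion of the family $\Theta_{1,2,\dots,2}$. This is a defect of the theorem as stated rather than of your strategy, but a complete proof has to confront it rather than defer it.
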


\begin{proof}
If $p=2$ the result follows from Theorem~\ref{th:goddard}(2).
If $p\ge 3$, the result follows from Theorem~\ref{th:und-theta-not1} and Lemma~\ref{lem:no-11}.
\end{proof}

Using Lemma~\ref{lem:und-pcn-greater-than-5}, Lemma~\ref{lem:und-pcn-3} and
Theorem~\ref{th:und-pcn-4}, we get that the packing chromatic number of any
 generalized theta graph $\Theta=\Theta_{\ell_1,\dots,\ell_p}$
can be computed in time $O(p)$.

%%%%%%%%%%%%%%%%%%%%%%%%%%%%%%%%%%%%%%%%%%%%%%%%%%%%%%%%%%%%%%%%%%%%%%%%%%%%%%%%%%%%%%%%%%%%%%%%%%%%%%%%%%%%%%%%%%%%%%%%%%%%%
%%%%%%%%%%%%%%%%%%%%%%%%%%%%%%%%%%%%%%%%%%%%%%%%%%%%%%%%%%%%%%%%%%%%%%%%%%%%%%%%%%%%%%%%%%%%%%%%%%%%%%%%%%%%%%%%%%%%%%%%%%%%%
%%%%%%%%%%%%%%%%%%%%%%%%%%%%%%%%%%%%%%%%%%%%%%%%%%%%%%%%%%%%%%%%%%%%%%%%%%%%%%%%%%%%%%%%%%%%%%%%%%%%%%%%%%%%%%%%%%%%%%%%%%%%%
%%%%%%%%%%%%%%%%%%%%%%%%%%%%%%%%%%%%%%%%%%%%%%%%%%%%%%%%%%%%%%%%%%%%%%%%%%%%%%%%%%%%%%%%%%%%%%%%%%%%%%%%%%%%%%%%%%%%%%%%%%%%%
%%%%%%%%%%%%%%%%%%%%%%%%%%%%%%%%% THETA - ORIENTED
\section{Oriented generalized theta graphs}
\label{sec:oriented}

In this section, we study the packing chromatic number of oriented generalized theta graphs
$\OR{\Theta}_{\ell_1,\dots,\ell_p}$. 
Recall that we  denote by $n_\ell$, $\ell\ge 1$, the number of paths of length~$\ell$, that is
$$n_\ell=|\{i\ /\ 1\le i\le p,\ \ell_i=\ell\}|.$$

We prove the following:

\begin{theorem}
For every oriented generalized theta graph $\OR{\Theta}=\ORT_{\ell_1,\dots,\ell_p}$, $p\ge 2$,
$2\le \pcn(\ORT)\le 5$ and these two bounds are tight.
\label{th:or-theta}
\end{theorem}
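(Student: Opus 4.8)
The lower bound $\pcn(\ORT)\ge 2$ is immediate, since $\ORT$ contains at least one arc and so cannot be coloured with a single colour. For the upper bound the plan is to exhibit \emph{one} explicit packing $5$-colouring $\varphi$ that works for every orientation. I would set $\varphi(u)=4$ and $\varphi(v)=5$; as each of the colours $4,5$ is then used exactly once, no constraint on them can ever be violated, and it remains to colour all internal vertices with $\{1,2,3\}$. Every path of length $\ell_i\neq 3$ is coloured by the corresponding pattern used in the proof of Theorem~\ref{th:und-theta}; these patterns put colour~$1$ on both neighbours of $u$ and of $v$ and use colours $2,3$ only on vertices at distance at least~$2$ from the end-vertices, and, being valid \emph{undirected} packing colourings, they remain valid here because $d_{\ORT}(x,y)\ge d_{\Theta}(x,y)$ for all $x,y$ (Proposition~\ref{prop:oriented}). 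Hence the whole difficulty is concentrated on the paths of length~$3$, which is precisely the parameter $n_3$ responsible for the undirected bound exceeding~$5$. (For $n_3\le 3$ one could instead simply invoke Proposition~\ref{prop:oriented} together with Theorem~\ref{th:und-theta}.)

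For a length-$3$ path $u-a-b-v$ I would colour the pair $(a,b)$ according to the orientation of its end-arcs, using the trichotomy: if $a\to u$ then $(a,b)=(2,1)$; if $u\to a$ and $b\to v$ then $(a,b)=(1,2)$; and if $u\to a$ and $v\to b$ then $(a,b)=(3,1)$. The key observation is that two neighbours of $u$ (respectively of $v$) that are both in-neighbours, or both out-neighbours, are at directed distance greater than~$2$: their only common neighbour is $u$ (respectively $v$), and a directed path of length~$2$ through $u$ requires one incoming and one outgoing arc. Consequently all the $a$'s receiving colour~$2$ are in-neighbours of $u$ and all the $b$'s receiving colour~$2$ are in-neighbours of $v$, so colour~$2$ is safe (an $a$ and a $b$ from different length-$3$ paths share no common neighbour, hence lie at distance at least~$3$). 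In the third case both end-arcs point \emph{away} from the end-vertices, so $a$ sits at a directed dead end—either $a$ is a sink, or its unique out-arc leads to a sink—and therefore no directed path joins two such vertices $a$; colour~$3$ may thus be reused on all of them, and a short computation (undirected distance at least~$4$ through $u$ or through $v$) separates them from the colour-$3$ vertices of the longer paths. Checking these few distance inequalities is routine and completes the upper bound.

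For tightness, the value $2$ is easy: choosing a theta graph all of whose path lengths have the same parity makes it bipartite, and orienting every edge from one side of the bipartition to the other (so that one part consists entirely of sinks) gives $\pcn=2$ by Proposition~\ref{prop:pcn2}. The value $5$ is attained by exhibiting one explicit orientation of one explicit theta graph and proving directly that it admits \emph{no} packing $4$-colouring; combined with the upper bound this forces $\pcn=5$.

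This last step is where I expect the real obstacle to be. Orientation is in general very favourable to packing colouring: the undirected extremal example $\Theta_{3,3,3}$ is readily $4$-coloured once oriented, because among the three neighbours of $u$ two necessarily share a type (in or out) and are then far apart, so colour~$2$ can be reused on them—exactly the reuse the undirected lower bound of Lemma~\ref{lem:und-theta-only-3} rules out. The extremal example must therefore be \emph{designed so that this reuse is impossible}; a natural device is to include paths of length~$2$, whose internal vertex is adjacent to \emph{both} $u$ and $v$, so that two vertices sharing a type at $u$ can still be brought to directed distance~$2$ through $v$. I would then prove the non-existence of a packing $4$-colouring by a case analysis on the colours forced on $u$, on $v$, and on their neighbourhoods, the delicate point being to exclude every way of reusing colours $1,2,3$ on the vertices lying close to both end-vertices.
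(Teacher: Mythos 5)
Your lower bound and your upper bound construction are essentially the paper's own proof: the paper also keeps the colouring of Theorem~\ref{th:und-theta} on $u$, $v$ and on all paths of length $\neq 3$, and recolours the internal vertices of each length-$3$ path according to the orientation of its two end-arcs. Your table differs from the paper's (Figure~\ref{fig:or-theta}) only in the case $a\to u$, $b\to v$, where you assign $(a,b)=(2,1)$ and the paper assigns $(1,2)$; both choices satisfy the invariants you state (every colour-$2$ vertex of a length-$3$ path points \emph{into} the end-vertex it is adjacent to, and every colour-$3$ vertex is a directed dead end), so this part of your argument is sound and interchangeable with the paper's.

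The genuine gap is the tightness of the upper bound. The theorem asserts that $5$ is attained, and your proposal stops exactly at the point where that has to be proved: you correctly diagnose that an oriented $\Theta_{3,3,3}$ admits a packing $4$-colouring and that short paths through both end-vertices are the right obstruction, but you exhibit neither a witness graph nor the non-existence argument. This is not a routine verification that can be deferred: it is the only nontrivial content of the tightness claim, and finding an orientation that defeats \emph{every} packing $4$-colouring is the hard part (as you yourself note, orientation generally helps the colourer). The paper's witness is the orientation $\OR{\Theta_0}$ of $\Theta_{2,2,3,3,5,5}$ obtained from three directed paths $u\to v$ of lengths $5$, $3$, $2$ and three directed paths $v\to u$ of lengths $5$, $3$, $2$; the proof that $\pcn(\OR{\Theta_0})\ge 5$ is a five-case analysis on the pair $(\pi(u),\pi(v))$ (using $d(u,v)=2$ to force $\pi(u)=\pi(v)=1$ when the two colours coincide), in each case propagating forced colours through the internal vertices of the length-$2$ and length-$3$ paths until some vertex has no colour left. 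Without such a witness and its case analysis, the statement ``these two bounds are tight'' is only half proved.
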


\begin{figure}
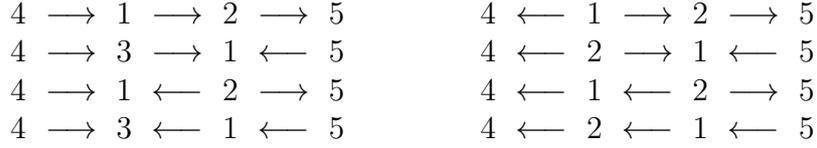

\begin{center}
$4\RR 1\RR 2\RR 5 \ \ \ \ \ \ \ \  \ \ \ \ \  4\LL 1\RR 2\RR 5$

$4\RR 3\RR 1\LL 5 \ \ \ \ \ \ \ \  \ \ \ \ \  4\LL 2\RR 1\LL 5$

$4\RR 1\LL 2\RR 5 \ \ \ \ \ \ \ \  \ \ \ \ \  4\LL 1\LL 2\RR 5$

$4\RR 3\LL 1\LL 5 \ \ \ \ \ \ \ \  \ \ \ \ \  4\LL 2\LL 1\LL 5$
\caption{Coloring of oriented paths of length 3 (proof of Theorem~\ref{th:or-theta})}
\label{fig:or-theta}
\end{center}
\end{figure}

\begin{proof}
It follows from Proposition~\ref{prop:pcn2} %Theorem~\ref{th:or-Cn} 
that 2 is a tight lower bound for $\pcn(\ORT)$.
By Proposition~\ref{prop:oriented} and Theorem~\ref{th:und-theta},
we know that $\pcn(\ORT)\le 5$ whenever $n_3\le 3$.

Assume thus that $n_3>3$.
Let us denote by $\OR{P_i}$ the orientation 
of the path $P_i$ for every $i$, $1\le i\le p$,
and let $\varphi:V(\ORT)\longrightarrow\{1,\dots,5\}$ be the mapping defined as 
in the proof of Theorem~\ref{th:und-theta}, except for the internal vertices
of the paths $\OR{P_i}$ with $\ell_i=3$, which are colored as shown in Figure~\ref{fig:or-theta}, according
to their orientation.

We claim that $\varphi$ is a packing 5-coloring of $\ORT$.
Again, the restriction
of $\varphi$ to any path $\OR{P_i}$ is a packing coloring of $\OR{P_i}$.
Moreover, from the proof of Theorem~\ref{th:und-theta}, we know that the restriction of $\varphi$
to $\bigcup\{\OR{P_i}:\ \ell_i\neq 3\}$ is a packing 5-coloring.
Hence, we just need to prove that for any two distinct vertices $x$ and $y$ with $x\in\OR{P_i}$, $\ell_i=3$, $\varphi(x)=\varphi(y)=c$,
$c\in\{2,3\}$ and $\{x,y\}\cap\{u,v\}=\emptyset$, we have $d_{\ORT}(x,y)>c$. 

Suppose first that $c=2$.
Since every vertex $y$ in $\OR{P_j}$, $\ell_j\neq 3$, with $\varphi(y)=2$ is at weak
directed distance at least 2 from $u$ and $v$, no conflict can occur between $x$ and $y$.
If $y$ belongs to some $\OR{P_j}$ with $\ell_j=3$ then 
the possible arcs are only $\OR{xu}$, $\OR{yu}$, $\OR{xv}$ and $\OR{yv}$ (see Figure ~\ref{fig:or-theta}),
and no conflict can occur between $x$ and $y$.

Suppose now that $c=3$. 
In that case, $x=x_i^1$ and $\OR{ux}$ is an arc (see Figure ~\ref{fig:or-theta}). 
Since every vertex $y$ in $\OR{P_j}$, $\ell_j\neq 3$, with $\varphi(y)=3$ is at weak
directed distance at least 3 from $u$ and at least 2 from $v$, no conflict can occur
between $x$ and such a $y$. If $y$ belongs to some $\OR{P_j}$ with $\ell_j=3$ then 
$y=y_i^1$ and $\OR{uy}$ is an arc, so that there is no conflict between $x$ and $y$.

We thus get $\pcn(\ORT)\le 5$.

Let us now prove that this bound is tight. For that, consider the oriented generalized
theta graph $\OR{\Theta_0}$ 
obtained by identifying (according
to their name, either $u$ or $v$) the end-vertices of the six following directed paths:
$$ux_1x_2x_3x_4v,\ \ uy_1y_2v,\ \ uz_1v,\ \ vx'_1x'_2x'_3x'_4u,\ \ vy'_1y'_2u,\ \ vz'_1u.$$
%depicted on Figure~\ref{fig:or-theta-5}.
We claim that $\pcn(\OR{\Theta_0})=5$.
To see that, suppose that there exists a packing 4-coloring $\pi$ of $\OR{\Theta_0}$.
We consider five cases, according to
the values of $\pi(u)$ and $\pi(v)$ (up to symmetry).
Note that since $d_{\OR{\Theta_0}}(u,v)=2$, %vertices $u$ and $v$ are at directed distance 2 in $\OR{\Theta}$,
we necessarily have $\pi(u)=\pi(v)=1$ whenever $\pi(u)=\pi(v)$.

\begin{enumerate}
\item $\pi(u)=\pi(v)=1$.\\
In that case, no vertex in $\{y_1,y_2,y'_1,y'_2,z_1,z'_1\}$ can be colored 1.
Moreover, since any two vertices in $\{y_1,y_2,y'_1,y'_2\}$
are linked by a directed path (in either direction) of length at most 3,
we necessarily have either $\pi(y_1)=\pi(y'_1)=2$ and $\{\pi(y_2),\pi(y'_2)\}=\{3,4\}$
or $\pi(y_2)=\pi(y'_2)=2$ and $\{\pi(y_1),\pi(y'_1)\}=\{3,4\}$.
In both cases, one vertex in $\{z_1,z'_1\}$ cannot be colored.

\item $\pi(u)=1$, $\pi(v)\in\{2,3\}$.\\
If $\pi(v)=2$ (respectively $\pi(v)=3$), we necessarily have
$\{\pi(z_1),\pi(z'_1)\}=\{3,4\}$ (respectively $\{\pi(z_1),\pi(z'_1)\}=\{2,4\}$).
If $\pi(z_1)=4$ (respectively $\pi(z'_1)=4$), 
then $\{\pi(y'_1),\pi(y'_2)\}=\{3\}$ (respectively $\{\pi(y_1),\pi(y_2)\}=\{3\}$), a contradiction.

\item $\pi(u)=1$, $\pi(v)=4$.\\
In that case, we necessarily have $\{\pi(z_1),\pi(z'_1)\}=\{2,3\}$,
which implies $\{\pi(y_1),\pi(y'_2)\}=\{2,3\}$ and $\pi(y_2)=\pi(y'_1)=1$.
If $\pi(z_1)=2$ then $\pi(x_1)=2$ and $\pi(x_2)=1$, so that $x_3$ cannot
be colored.
If $\pi(z'_1)=2$ then $\pi(x'_1)=2$ and $\pi(x'_2)=1$, so that $x'_3$ cannot
be colored.

\item $\pi(u)=2$, $\pi(v)\in\{3,4\}$.\\
If $\pi(v)=3$ (respectively $\pi(v)=4$), we necessarily have
$\{\pi(y_1),\pi(y_2),\pi(y'_1),\pi(y'_2)\}=\{1,4\}$ (respectively $\{\pi(y_1),\pi(y_2),\pi(y'_1),\pi(y'_2)\}=\{1,3\}$),
a contradiction since any two vertices in $\{y_1,y_2,y'_1,y'_2\}$ are linked  
by a directed path (in either direction) of length at most 3.

\item $\pi(u)=3$, $\pi(v)=4$.\\
Since each vertex $x_i$, $1\le i\le 4$, is linked by a directed path (in either direction)
of length at most 3 to $u$ and by a directed path (in either direction)
of length at most 4 to $v$, we necessarily have $\{\pi(x_1),\pi(x_2),\pi(x_3),\pi(x_4)\}=\{1,2\}$, a contradiction.

\end{enumerate}

Therefore, every packing coloring of an oriented generalized
theta graph containing $\OR{\Theta_0}$ as a subgraph must use 5 colors.
\end{proof}

By Proposition~\ref{prop:pcn2}, we know that for every oriented generalized theta graph $\ORT$,
$\pcn(\ORT)=2$ if and only if $\ORT$ is bipartite and one part of the bipartition contains
only sources or sinks. 
However, characterizing oriented generalized theta graphs with packing chromatic number 3, 4 or 5
seems to be not so easy and we leave it as an open question.

From Lemma~\ref{lem:und-theta-44}, we get that $\pcn(\ORT)\le 4$ whenever $\Theta$ does not
contain any path of length less than 5. 
However, we believe that this value~5 cannot be decreased to~4.
In other words, we conjecture that
there exist oriented generalized theta graphs with
packing chromatic number~5 containing no path of length less than~4.

%%%%%%%%%%%%%%%%%%%%%%%%%%%%%%%%%%%%%%%%%%%%%%%%%%%%%%%%%%%%%%%%%%%%%%%%%%%%%%%%%%%%%%%%%%%%%%%%%%%%%%%%%%%%%%%%%%%%%%%%%%%%%
%%%%%%%%%%%%%%%%%%%%%%%%%%%%%%%%%%%%%%%%%%%%%%%%%%%%%%%%%%%%%%%%%%%%%%%%%%%%%%%%%%%%%%%%%%%%%%%%%%%%%%%%%%%%%%%%%%%%%%%%%%%%%
%%%%%%%%%%%%%%%%%%%%%%%%%%%%%%%%%%%%%%%%%%%%%%%%%%%%%%%%%%%%%%%%%%%%%%%%%%%%%%%%%%%%%%%%%%%%%%%%%%%%%%%%%%%%%%%%%%%%%%%%%%%%%
%%%%%%%%%%%%%%%%%%%%%%%%%%%%%%%%%%%%%%%%%%%%%%%%%%%%%%%%%%%%%%%%%%%%%%%%%%%%%%%%%%%%%%%%%%%%%%%%%%%%%%%%%%%%%%%%%%%%%%%%%%%%%
%%%%%%%%%%%%%%%%%%%%%%%%%%%%%%%%% DISCUSSION
%\section{Discussion}
%\label{sec:discussion}

\bigskip

\noindent {\bf Acknowledgement.}
Most of this work has been done while the first author was
visiting LaBRI, thanks to a seven-months PROFAS-B+ grant cofunded by the Algerian
and French governments.
The third author was partially supported by the Cluster of excellence CPU, from the
Investments for the future Programme IdEx Bordeaux (ANR-10-IDEX-03-02).

%%%%%%%%%%%%%%%%%%%%%%%%%%%%%%%%% BIBLIOGRAPHY

\end{document}